\newtheorem{theorem}{Theorem}[section]
\theoremstyle{definition}
\newtheorem{defn}{Definition}[section]
\newtheorem{example}{Example}[section]
\newsavebox{\@brx}
\newcommand{\llangle}[1][]{\savebox{\@brx}{\(\m@th{#1\langle}\)}%
  \mathopen{\copy\@brx\kern-0.5\wd\@brx\usebox{\@brx}}}
\newcommand{\rrangle}[1][]{\savebox{\@brx}{\(\m@th{#1\rangle}\)}%
  \mathclose{\copy\@brx\kern-0.5\wd\@brx\usebox{\@brx}}}
\newcommand{\ldbc}{[\![}
\newcommand{\rdbc}{]\!]}
\newcommand{\tl}{\otimes \mathsf{L}}
\newcommand{\tr}{\otimes \mathsf{R}}
\newcommand{\lright}{{\multimap}\mathsf{R}}
\newcommand{\lleft}{{\multimap}\mathsf{L}}
\newcommand{\pass}{\mathsf{pass}}
\newcommand{\unitl}{\mathsf{IL}}
\newcommand{\unitr}{\mathsf{IR}}
\newcommand{\ax}{\mathsf{ax}}
\newcommand{\id}{\mathsf{id}}
\newcommand{\ot}{\otimes}
\newcommand{\lolli}{\multimap}
\newcommand{\illol}{\rotatebox[origin=c]{180}{$\multimap$}}
\newcommand{\I}{\mathsf{I}}
\newcommand{\RI}{\mathsf{RI}}
\newcommand{\LI}{\mathsf{LI}}
\newcommand{\Pass}{\mathsf{P}}
\newcommand{\F}{\mathsf{F}}
\newcommand{\xvdash}{\vdash^{x}}
\newcommand{\otd}{\ot^D}
\newcommand{\lollid}{\;\textsuperscript{$D$}\!\!\lolli}
\newcommand{\proofbox}[1]{\begin{tabular}{l} #1 \end{tabular}}
\newcommand{\MILL}{$\mathtt{MILL}$}
\newcommand{\NMILL}{$\mathtt{NMILL}$}
\newcommand{\SkNMILL}{$\mathtt{SkNMILL}$}
\newcommand{\FSkMCC}{\mathsf{FSkMCl}}
\title{Proof Theory of Skew Non-Commutative \MILL}
\author{Tarmo Uustalu
\institute{Reykjavik University, Iceland}
\institute{Tallinn University of Technology, Estonia}
\email{tarmo@ru.is}
\and
Niccol{\`o} Veltri \qquad\qquad Cheng-Syuan Wan
\institute{Tallinn University of Technology, Estonia}
\email{\quad niccolo@cs.ioc.ee \quad\qquad cswan@cs.ioc.ee}
}
\begin{document}
\maketitle
\begin{abstract}
  Monoidal closed categories naturally model \NMILL, non-commutative multiplicative intuitionistic linear logic: the monoidal unit and tensor interpret the multiplicative verum and conjunction; the internal hom interprets linear implication. In recent years, the weaker notion of (left) skew monoidal closed category has been proposed by Ross Street, where the three structural laws of left and right unitality and associativity are not required to be invertible, they are merely natural transformations with a specific orientation. A question arises: is it possible to find a logic which is naturally modelled by skew monoidal closed categories? We answer positively by introducing a cut-free sequent calculus for a skew version of \NMILL\ that is a presentation of the free skew monoidal closed category. We study the proof-theoretic semantics of the sequent calculus by identifying a calculus of derivations in normal form, obtained from an adaptation of Andreoli's focusing technique to the skew setting. The resulting focused sequent calculus peculiarly employs a system of tags for keeping track of new formulae appearing in the antecedent and appropriately reducing non-deterministic choices in proof search. Focusing solves the coherence problem for skew monoidal closed categories by exhibiting an effective procedure for deciding equality of maps in the free such category.
\end{abstract}

\section{Introduction}

It is a widely known fact from the late 80s/early 90s that symmetric monoidal closed categories model \MILL, multiplicative intuitionistic linear logic, whose logical connectives comprise multiplicative verum $\I$ and conjunction $\ot$ and linear implication $\lolli$ \cite{mellies:categorical:09}. Probably lesser-known, though a quite straightforward variation of this observation (which actually predated the inception of linear logic altogether) is the fact that (not necessarily symmetric) monoidal biclosed categories provide a semantics for the non-commutative variant of \MILL\ (for which we use the abbreviation \NMILL) \cite{abrusci:noncommutative:1990}, where the structural rule of exchange is absent and there are two ordered implications $\lolli$ and $\illol$. The sequent calculus of \NMILL\ without verum is known as the \emph{Lambek calculus} \cite{lambek:deductive:68}. From the beginning, the Lambek calculus has been employed in formal investigations of natural languages \cite{lambek:mathematics:58}. Lambek refers to the implications $\lolli$ and $\illol$ as residuals, and monoidal biclosed categories without unit are therefore also called residual categories.
By dropping one of the ordered implications of \NMILL, one obtains a fragment of the logic interpretable in every monoidal closed category.

In recent years, Ross Street introduced the new notion of (left) skew monoidal closed categories \cite{street:skew-closed:2013}. These are a weakening of monoidal closed categories: their structure includes a unit $\I$, a tensor $\ot$, an internal hom $\lolli$ and an adjunction relating the latter two operations, as in usual non-skew monoidal closed categories. The difference lies in the three structural laws of left and right unitality, $\lambda_A : \I \ot A \to A$ and $\rho_A : A \to A \ot \I$, and associativity, $\alpha_{A,B,C} : (A \ot B) \ot C \to A \ot (B \ot C)$, which are usually required to be natural isomorphisms, but in the skew variant are merely natural transformations with the specified orientation. Street originally proposed this weaker notion to reach better understanding of and to fix a famous imbalance, first noticed by Eilenberg and Kelly \cite{eilenberg:closed:1966}, present in the adjunction relating monoidal and closed structures \cite{street:skew-closed:2013,uustalu:eilenberg-kelly:2020}. In the last decade, skew monoidal closed categories, together with their non-closed/non-monoidal variants, have been thoroughly studied, with applications ranging from algebra and homotopy theory to programming language semantics \cite{szlachanyi:skew-monoidal:2012,lack:skew:2012,lack:triangulations:2014,altenkirch:monads:2014,buckley:catalan:2015,bourke:skew:2017,bourke:skew:2018,tomita:realizability:21}.

A question arises naturally: is it possible to characterize skew monoidal closed categories as categorical models of (a deductive system for) a logic?
This paper provides a positive answer to this question. We introduce a cut-free sequent calculus for a skew variant of \NMILL, which we name \SkNMILL. Sequents are peculiarly defined as triples $S \mid \Gamma \vdash A$, where the succedent is a single formula $A$ (as in intuitionistic linear logic), but the antecedent is divided into two parts: an optional formula $S$, called the \emph{stoup} \cite{girard:constructive:91}, and an ordered list of formulae $\Gamma$, the \emph{context}. Inference rules are similar to the ones of \NMILL\ but with specific structural restrictions for accurately capturing the structural laws of skew monoidal closed categories and nothing more. In particular, and in analogy with \NMILL, the structural rules of weakening, contraction and exchange are all absent. Sets of derivations are quotiented by a congruence relation $\circeq$, carefully crafted to serve as the sequent-calculus counterpart of the equational theory of skew monoidal closed categories. The design of the sequent calculus draws inspiration from, and further advance, the line of work of Uustalu, Veltri and Zeilberger on proof systems for various categories with skew structure: skew semigroup (a.k.a.\ the Tamari order) \cite{zeilberger:semiassociative:19}, skew monoidal (non-closed) \cite{uustalu:sequent:2021,uustalu:proof:nodate} and its symmetric variant \cite{veltri:coherence:2021}, skew prounital (non-monoidal) closed \cite{uustalu:deductive:nodate}.

The metatheory of \SkNMILL\ is developed in two different but related directions:
\begin{itemize}
  \item We study the categorical semantics of \SkNMILL, by showing that the cut-free sequent calculus admits an interpretation of its formulae, derivations and equational theory $\circeq$ in any skew monoidal closed category as soon as one fixes the intended interpretation in it of the atoms. Moreover, the sequent calculus is the initial model in this semantics; it is a particular presentation of the free skew monoidal closed category. This can be shown directly or by first introducing a Hilbert-style calculus which directly presents the free skew monoidal closed category and proving that derivations in the two calculi are in a bijective correspondence.

\item We investigate the proof-theoretic semantics of \SkNMILL, by defining a normalization strategy for sequent calculus derivations wrt.\ the congruence $\circeq$, when the latter is considered as a locally confluent and strongly normalizing reduction relation. The shape of normal forms is made explicit in a new \emph{focused} sequent calculus, whose derivations act as the target of the normalization procedure. The sequent calculus is ``focused'' in the sense of Andreoli \cite{andreoli:logic:1992}, as it describes a sound and complete root-first proof search strategy for the original sequent calculus. The focused system in this paper builds on and elaborates the previously
developed normal forms for skew monoidal categories \cite{uustalu:sequent:2021} and skew prounital closed categories \cite{uustalu:deductive:nodate}. The presence of both positive ($\I$,$\ot$) and negative ($\lolli$) connectives requires some extra care in the design of the proof search strategy, reflected in the focused sequent calculus, in particular when aiming at removing all possible undesired non-deterministic choices leading to $\circeq$-equivalent derivations that can arise during proof search. This is technically realized in the focused sequent calculus by peculiar employment of a system of \emph{tags} for keeping track of new formulae appearing in the antecedent. The focused sequent calculus can also be seen as an optimized presentation of the initial model for \SkNMILL, and as such can be used for solving the coherence problem in an effective way: deciding equality of two maps in this model (or equality of two canonical maps in every model) is equivalent to deciding \emph{syntactic} equality of the corresponding focused derivations.
\end{itemize}

The equivalence between sequent calculus derivations, quotiented by the equivalence relation $\circeq$, and focused derivations, that we present in Section \ref{sec:focus}, has been formalized in the Agda proof assistant. The associated code is available at \url{https://github.com/niccoloveltri/code-skewmonclosed}.

\section{A Sequent Calculus for Skew Non-Commutative \MILL}\label{sec2}

We begin by introducing a sequent calculus for a skew variant of non-commutative multiplicative intuitionistic linear logic (\NMILL), that we call \SkNMILL.

Formulae are inductively generated by the grammar $A,B ::= X \ | \ \I \ | \ A \ot B \ | \ A \lolli B$, where $X$ comes from a fixed set $\mathsf{At}$ of atoms, $\I$ is a multiplicative verum, $\ot$ is a multiplicative conjunction and $\lolli$ is a linear implication.

A sequent is a triple of the form $S \mid \Gamma \vdash A$, where the succedent $A$ is a single formula (as in \NMILL) and the antecedent is divided in two parts: an optional formula $S$, called \emph{stoup} \cite{girard:constructive:91}, and an ordered list of formulae $\Gamma$, called \emph{context}. The peculiar design of sequents, involving the presence of the stoup in the antecedent, comes from previous work on deductive systems with skew structure by Uustalu, Veltri and Zeilberger \cite{uustalu:sequent:2021,uustalu:proof:nodate,uustalu:deductive:nodate,veltri:coherence:2021}.
The metavariable $S$ always denotes a stoup, i.e., $S$ can be a single formula or empty, in which case we write $S = {-}$, and $X,Y,Z$ are always names of atomic formulae.

Derivations of a sequent $S \mid \Gamma \vdash A$ are inductively generated by the following rules:

\begin{equation}\label{eq:seqcalc}
  \def\arraystretch{2.5}
  \begin{array}{c}
    \infer[\ax]{A \mid \quad \vdash A}{}
    \qquad
    \infer[\pass]{{-} \mid A , \Gamma \vdash C}{A \mid \Gamma \vdash C}
    \qquad
    \infer[\lleft]{A \lolli B \mid \Gamma , \Delta \vdash C}{
      {-} \mid \Gamma \vdash A
      &
      B \mid \Delta \vdash C
    }
    \qquad
    \infer[\lright]{S \mid \Gamma \vdash A \lolli B}{S \mid \Gamma , A \vdash B}
    \\
    \infer[\unitl]{\I \mid \Gamma \vdash C}{{-} \mid \Gamma \vdash C}
    \qquad
    \infer[\tl]{A \ot B \mid \Gamma \vdash C}{A \mid B , \Gamma \vdash C}
    \qquad
    \infer[\unitr]{{-} \mid \quad \vdash \I}{}
    \qquad
    \infer[\tr]{S \mid \Gamma , \Delta \vdash A \ot B}{
      S \mid \Gamma \vdash A
      &
      {-} \mid \Delta \vdash B
    }
  \end{array}
\end{equation}

The inference rules in (\ref{eq:seqcalc}) are reminiscent of the ones in the sequent calculus for \NMILL\ \cite{abrusci:noncommutative:1990}, but there are some crucial differences.
\begin{enumerate}
\item The left logical rules $\unitl$, $\tl$ and $\lleft$, read bottom-up, are only allowed to be applied on the formula in the stoup position. In particular, there is no general way to remove a unit $\I$ nor decompose a tensor $A \ot B$ if these formulae are located in the context and not in the stoup (we will see in (\ref{eq:lleft:gen}) that something can actually be done to deal with implications $A \lolli B$ in the context).
\item The right tensor rule $\tr$, read bottom-up, splits the antecedent of the conclusion between the two premises whereby the formula in the stoup, in case such a formula is present, has to be moved to the stoup of the first premise. In particular, the stoup formula of the conclusion cannot be moved to the antecedent of the second premise even if $\Gamma$ is chosen to be empty.
\item The presence of the stoup implies a distinction between antecedents of forms $A \mid \Gamma$ and ${-} \mid A, \Gamma$. The structural rule $\pass$ (for `passivation'), read bottom-up, allows the moving of the leftmost formula in the context to the stoup position whenever the stoup is initially empty.
\item The logical connectives of \NMILL\ typically include two ordered implications $\lolli$ and $\illol$, which are two variants of linear implication arising from the removal of the exchange rule from intuitionistic linear logic. In \SkNMILL\ only one of the ordered implications (the left implication $\lolli$) is present. It is currently not clear to us whether the inclusion of the second implication to our logic is a meaningful addition and whether it corresponds to some particular categorical notion.
\end{enumerate}
The restrictions in 1--4 are essential for precisely capturing all the features of skew monoidal closed categories and nothing more, as we discuss in Section \ref{sec:catsem}.
Notice also that, similarly to the case of \NMILL, all structural rules of weakening, contraction and exchange are absent. We give names to derivations and we write $f : S \mid \Gamma \vdash A$ when $f$ is a particular derivation of the sequent $S \mid \Gamma \vdash A$.

Examples of valid derivations in the sequent calculus, corresponding to the structural laws $\lambda$, $\rho$ and $\alpha$ of skew monoidal closed categories (see Definition \ref{def:skewcat}) are given below.
\begin{equation}\label{eq:lra}
\small
  \begin{array}{c@{\;\quad}cc}
  (\lambda) & (\rho) & (\alpha) \\
  \infer[\tl]{\I \ot A \mid \quad \vdash A}{
    \infer[\unitl]{\I \mid A \vdash A}{
      \infer[\pass]{{-} \mid A \vdash A}{
        \infer[\ax]{A \mid \quad \vdash A}{}
      }
    }
  }
  &
  \infer[\tr]{A \mid \quad \vdash A \ot \I}{
    \infer[\ax]{A \mid \quad \vdash A}{}
    &
    \infer[\unitr]{{-} \mid \quad \vdash \I}{}
  }
  &
  \infer[\tl]{(A \ot B) \ot C \mid \quad \vdash A \ot (B \ot C)}{
    \infer[\tl]{A \ot B \mid C \vdash A \ot (B \ot C)}{
      \infer[\tr]{A \mid B , C \vdash A \ot (B \ot C)}{
        \infer[\ax]{A \mid \quad \vdash A}{}
        &
        \infer[\pass]{{-} \mid B , C \vdash B \ot C}{
          \infer[\tr]{B \mid C \vdash B \ot C}{
            \infer[\ax]{B \mid \quad \vdash B}{}
            &
            \infer[\pass]{{-} \mid C \vdash C}{
              \infer[\ax]{C \mid \quad \vdash C}{}
            }
          }
        }
      }
    }
  }
  \end{array}
\end{equation}
Examples of non-derivable sequents include the ``inverses'' of the conclusions in (\ref{eq:lra}), obtained by swapping the stoup formula with the succedent formula.
More precisely, the three sequents $X \mid ~ \vdash \I \ot X$, $X \ot \I \mid ~ \vdash X$ and $X \ot (Y \ot Z) \mid ~ \vdash (X \ot Y) \ot Z$ do not have any derivation. All possible attempts of constructing a valid derivation for each of them end in failure.
\begin{displaymath}
\small
  \begin{array}{ccc}
  (\lambda^{-1}) & (\rho^{-1}) & (\alpha^{-1}) \\[6pt]
    \infer[\tr]{X \mid ~\vdash \I \ot X}{
      \deduce[??]{X \mid ~ \vdash \I}{
      }
      &
      \deduce[??]{{-} \mid ~ \vdash X}{
      }
    }
    &
    \infer[\tl]{X \ot \I \mid \quad \vdash X}{
      \deduce{X \mid \I \vdash X}{??}
    }
    &
    \infer[\tl]{X \ot (Y \ot Z) \mid ~\vdash (X \ot Y) \ot Z}{
      \deduce{X \mid Y \ot Z \vdash (X \ot Y) \ot Z}{??}
    } \\
    (\text{$\tr$ sends $X$ to 1st premise}) &
    (\text{$\unitl$ does not act on $\I$ in context}) &
    (\text{$\tl$ does not act on $\ot$ in context})
  \end{array}
\end{displaymath}
Analogously, the sequents $\I \lolli A \mid ~ \vdash A$ and $(A \ot B) \lolli C \mid ~ \vdash A \lolli (B \lolli C)$ are derivable, while generally their ``inverses'' are not. Also, a derivation of $A \mid ~ \vdash B$ always yields a derivation of $\I \mid ~ \vdash A \lolli B$, but there are $A$, $B$ such that $\I \mid ~ \vdash A \lolli B$ is derivable while  $A \mid ~ \vdash B$ is not (take, e.g., $A = X$, $B = \I \ot X$).

Sets of derivations are quotiented by a congruence relation $\circeq$, generated by the following pairs of derivations.
\begin{equation}
\label{fig:circeq}
\begin{array}{rlll}
  \ax_{\I} &\circeq \unitl \text{ } (\unitr)
  \\
  \ax_{A \ot B} &\circeq \tl \text{ } (\tr \text{ } (\ax_{A} , \pass \text{ } \ax_{B}))
  \\
  \ax_{A \lolli B} &\circeq \lright \text{ } (\lleft \text{ } (\pass \text{ } \ax_{A}, \ax_{B} ))
  \\
  \tr \text{ } (\pass \text{ } f, g) &\circeq \pass \text{ } (\tr \text{ } (f, g)) &&f : A' \mid \Gamma \vdash A, g : {-} \mid \Delta \vdash B
  \\
  \tr \text{ } (\unitl \text{ } f, g) &\circeq \unitl \text{ } (\tr \text{ } (f , g)) &&f : {-} \mid \Gamma \vdash A , g : {-} \mid \Delta \vdash B
  \\
  \tr \text{ } (\tl \text{ } f, g) &\circeq \tl \text{ } (\tr \text{ } (f , g)) &&f : A' \mid B' , \Gamma \vdash A , g : {-} \mid \Delta \vdash B
  \\
  \tr \text{ } (\lleft \text{ } (f , g), h) & \circeq \lleft \text{ } (f, \tr \text{ } (g, h)) &&f: {-} \mid \Gamma \vdash A, g : B \mid \Delta \vdash C, h : {-} \mid \Lambda \vdash D
  \\
  \pass \text{ } (\lright \text{ } f) &\circeq \lright \text{ } (\pass \text{ } f) &&f : A' \mid \Gamma , A \vdash B
  \\
  \unitl \text{ } (\lright \text{ } f) &\circeq \lright \text{ } (\unitl \text{ } f) &&f : {-} \mid \Gamma , A \vdash B
  \\
  \tl \text{ } (\lright \text{ } f) &\circeq \lright \text{ } (\tl \text{ } f) &&f : A \mid B , \Gamma , C \vdash D
  \\
  \lleft \text{ } (f, \lright \text{ } g) &\circeq \lright \text{ } (\lleft \text{ } (f, g)) &&f : {-} \mid \Gamma \vdash A', g : B' \mid \Delta , A \vdash B
\end{array}
\end{equation}
The first three equations above are $\eta$-conversions, completely characterizing the $\ax$ rule on non-atomic formulae. The remaining equations are permutative conversions. The congruence $\circeq$ has been carefully chosen to serve as the proof-theoretic counterpart of the equational theory of skew monoidal closed categories, introduced in Definition \ref{def:skewcat}. The subsystem of equations involving only $(\I,\ot)$ originated in \cite{uustalu:sequent:2021} while the subsystem involving only $\lolli$ is from \cite{uustalu:deductive:nodate}.

\begin{theorem}
The sequent calculus enjoys cut admissibility: the following two cut rules are admissible.
  \begin{displaymath}
    \infer[\mathsf{scut}]{S \mid \Gamma , \Delta \vdash C}{
      S \mid \Gamma \vdash A
      &
      A \mid \Delta \vdash C
    }
    \qquad
    \infer[\mathsf{ccut}]{S \mid \Delta_0 , \Gamma , \Delta_1 \vdash C}{
      {-} \mid \Gamma \vdash A
      &
      S \mid \Delta_0 , A , \Delta_1 \vdash C
    }
  \end{displaymath}
\end{theorem}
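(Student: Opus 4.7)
I would prove cut admissibility by simultaneous induction, with the primary measure being the size of the cut formula $A$ and the secondary being the combined size of the two premise derivations. The two cut rules must be treated together because principal cases of $\mathsf{scut}$ on a compound formula reduce to $\mathsf{ccut}$ on a strict subformula, and the crucial case of $\mathsf{ccut}$ reduces to an $\mathsf{scut}$ on the \emph{same} cut formula but on strictly smaller subderivations.

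For $\mathsf{scut}$ with premises $f : S \mid \Gamma \vdash A$ and $g : A \mid \Delta \vdash C$, I would proceed by structural case analysis on $f$, distinguishing \emph{commutative} cases, where $f$ ends in a rule that does not introduce the principal connective of $A$, from \emph{principal} cases, where it does. In the commutative cases ($\ax$, $\pass$, $\unitl$, $\tl$, $\lleft$) the outer rule is permuted past the cut and $\mathsf{scut}$ is called recursively on the stoup-typed premise; e.g.\ when $f = \pass \, f'$ we return $\pass(\mathsf{scut}(f', g))$, and when $f = \ax$ we return $g$. In each principal case we perform a secondary case analysis on $g$: when $g$ ends in a right rule we again permute, and when $g$ ends in a left rule acting on the cut formula we reduce to smaller cuts. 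For instance, if $f = \tr(f_1, f_2)$ with $A = A_1 \ot A_2$ and $g = \tl \, g'$ with $g' : A_1 \mid A_2, \Delta \vdash C$, we form $\mathsf{ccut}(f_2, g')$ on the strictly smaller cut formula $A_2$ and then $\mathsf{scut}$ the result with $f_1$ on $A_1$. Analogously, when $f = \lright \, f'$ and $g = \lleft(g_1, g_2)$, we invoke $\mathsf{ccut}(g_1, f')$ on $A_1$ followed by $\mathsf{scut}$ with $g_2$ on $A_2$. The $\unitr/\unitl$ case is immediate.

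For $\mathsf{ccut}$ with premises $f : {-} \mid \Gamma \vdash A$ and $g : S \mid \Delta_0, A, \Delta_1 \vdash C$, I would do induction on $g$. Every case permutes $g$'s last rule below the cut, with care taken to track whether $A$ lies in $\Delta_0$ or $\Delta_1$; in particular, when $g$ ends in $\tr$, the position of $A$ determines which premise $\mathsf{ccut}$ recurses into, while $\lleft$ has the analogous dichotomy for its context split. The decisive case is $g = \pass \, g'$ with $\Delta_0$ empty: then $A$ is precisely the formula promoted to the stoup, so $g' : A \mid \Delta_1 \vdash C$ and we return $\mathsf{scut}(f, g')$. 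Although the cut formula is unchanged here, both premises are strictly smaller than those of the original $\mathsf{ccut}$, so the lexicographic measure still decreases.

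The main obstacle is bookkeeping: one must verify that every recursive call in every case fits the lexicographic measure $(|A|, |f|+|g|)$, especially in the $\tr$/$\tl$ and $\lright$/$\lleft$ principal cases where control ping-pongs between $\mathsf{scut}$ and $\mathsf{ccut}$. The skew setting does not introduce conceptual difficulties beyond \NMILL\ proper here, since the stoup discipline and the absence of weakening, contraction, and exchange actually \emph{reduce} the number of candidate permutations that have to be considered; nevertheless the case split is voluminous, which is precisely why the accompanying Agda formalization is valuable.
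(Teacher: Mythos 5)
Your proposal is correct and follows essentially the approach the paper relies on (the paper states the theorem without an explicit proof, deferring to the Agda formalization and the strategy of the earlier skew monoidal work): mutual structural induction for $\mathsf{scut}$ and $\mathsf{ccut}$ under the lexicographic measure on cut-formula size and derivation sizes, with the principal $\tr/\tl$ and $\lright/\lleft$ cases ping-ponging between the two cuts on strict subformulae and the $\pass$ case of $\mathsf{ccut}$ collapsing to $\mathsf{scut}$ on a smaller derivation. The only omissions are trivial subcases ($g=\ax$ in the principal cases, and the vacuous $\ax/\unitr$ cases of $\mathsf{ccut}$), which do not affect correctness.
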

The two cut rules satisfy a large number of
$\circeq$-equations, see e.g, \cite[Figures 5 and 6]{uustalu:sequent:2021} for the
list of such equations not involving $\lolli$.
In particular, the cut rules respect $\circeq$, in the sense that $\mathsf{scut}(f,g) \circeq \mathsf{scut}(f',g')$ whenever $f \circeq f'$ and $g \circeq g'$, and similarly for $\mathsf{ccut}$.

Here are some other interesting admissible rules relevant for the metatheory of this calculus.
\begin{itemize}
\item The left rules for $\I$ and $\ot$ are invertible up to $\circeq$, and similarly the right rule for $\lolli$.
No other rule is invertible; in particular, the
passivation rule $\pass$ is not.

\item
Applications of the invertible left logical rules can be iterated, and similarly for the invertible right $\lright$ rule, resulting in the two admissible rules
\begin{equation}\label{eq:inter:ante}
  \infer[\mathsf{L}^\star]{\ldbc S \mid \Gamma \rdbc_{\ot} \mid \Delta \vdash C}{
    S \mid \Gamma ,\Delta \vdash C
  }
  \qquad
  \infer[\lright^\star]{S \mid \Gamma \vdash \ldbc \Delta \mid C \rdbc_{\lolli}}{
    S \mid \Gamma , \Delta \vdash C
  }
\end{equation}
The interpretation of antecedents $\ldbc S \mid \Gamma \rdbc_{\ot}$ in (\ref{eq:inter:ante}) is the formula obtained by substituting the separator $\mid$ and the commas with tensors, $\ldbc S \mid A_1,\dots,A_n \rdbc_{\ot} = (\dots ((\ldbc S \llangle \ot A_1) \ot A_2) \dots ) \ot A_n$, where the interpretation of stoups is defined by $\ldbc {-} \llangle = \I$ and $\ldbc A \llangle = A$.
Dually, the formula $\ldbc \Delta \mid C \rdbc_{\lolli}$ in (\ref{eq:inter:ante}) is obtained by substituting $\mid$ and commas with implications:
$\ldbc A_1,\dots,A_n \mid C \rdbc_{\lolli} = A_1 \lolli (A_2 \lolli (\dots \lolli (A_n \lolli C)))$.

\item
Another left implication rule, acting on a formula $A \lolli B$ in the context, is derivable using cut:
\begin{equation}\label{eq:lleft:gen}
\small
    \!\!\!
  \proofbox{
    \infer[\lleft_{\mathsf{C}}]{S \mid \Delta_0, A \lolli B, \Gamma , \Delta_1 \vdash C}{
      \deduce{{-} \mid \Gamma \vdash A}{f}
      &
      \deduce{S \mid \Delta_0, B , \Delta_1 \vdash C}{g}
    }
  }
  {=}
  \proofbox{
    \infer[\mathsf{ccut}]{S \mid \Delta_0, A \lolli B, \Gamma , \Delta_1 \vdash C}{
      \infer[\pass]{{-} \mid A \lolli B, \Gamma \vdash B}{
        \infer[\lleft]{A \lolli B \mid \Gamma \vdash B}{
          \deduce{{-} \mid \Gamma \vdash A}{f}
          &
          \infer[\ax]{B \mid ~ \vdash B}{}
        }
      }
      &
      \deduce{S \mid \Delta_0, B , \Delta_1 \vdash C}{g}
    }
  }
\end{equation}
\end{itemize}

\paragraph{\SkNMILL\ as a Logic of Resources}

Similarly to other substructural logics like \MILL\ and \NMILL,\linebreak \SkNMILL\ can be understood
as a logic of resources. Under this perspective, formulae of the
sequent calculus in (\ref{eq:seqcalc}) correspond to
resources: atomic formulae are primitive resources; the formula $A \ot
B$ is read as ``resource $A$ before resource $B$''; $\I$ is ``nothing''; the formula $A \lolli B$
can be described as a method for turning the resource $A$ into the
resource $B$ (if $A \lolli B$ is provided before $A$). As in other substructural systems lacking the structural
rules of weakening and contraction, all resources are one-time
usable.

The antecedent of a sequent contains the resources at hand, while the
succedent contains the resource that needs to be produced. A
derivation is then a particular procedure for turning the available
resources into the goal resource. Under this interpretation,
derivations are naturally read and built from the conclusion to the
premises. Resources in the antecedent are ordered, meaning that they
need to be utilized in the order they appear. If a resource $A$
precedes another resource $B$ in the antecedent, then
$A$ must be consumed before $B$. The stoup position, when it is
non-empty, contains the resource that is immediately usable. The
resources in the context can only be spent after the resource in the stoup has been used. Time flows
bottom-up in proof trees, from conclusion to premises, and the proof
of a left premise always takes place before the proof of the right
premise.

The context shares similarities with the stack (and also the queue) data
structure. If, at some moment, there is no resource immediately consumable, as stoup is empty, then the
next available resource can be promoted to this status: the top (left-most position) of the context
can be popped and moved to the
stoup by the $\pass$ rule. A new resource can be pushed to the bottom (right-most
position) of the context using the $\lright$ rule. Using the $\tl$
rule, the immediately usable resource $A \ot B$ can be decomposed into two parts $A$ and $B$; the resource $A$ becomes immediately usable (remains in the stoup) while $B$ will become usable next (is pushed to the top of the context).  The resource $\I$, if immediately usable, can be disposed by the rule $\unitl$, making it possible to pop the next available
resource from the context. The resource $\I$ can always be produced
free-of-charge with the rule $\unitr$.

In the rule $\lleft$, we have a resource $A
\lolli B$ immediately usable and need to produce $C$. This gives us
access to the resource $B$, but only after we have produced $A$ by spending a part of the other resources available.  The context is split into two parts $\Gamma$ and $\Delta$. The first part $\Gamma$ is used to make $A$. Once this has been accomplished, production of $C$ can continue with $B$ immediately usable and $\Delta$ usable thereafter.

The succedent of the rule $\tr$ is of the form $A \ot B$, which
implies that first $A$ and then $B$ need to be produced. This justifies the
splitting of the context into two parts again: the first part $\Gamma$, consisting of
resources that can be spent sooner, is used to produce $A$ in the left
premise, while the second part $\Delta$ is spent subsequently for production of $B$ in the
right premise. Crucially, and this is one central ``skew'' aspect
of \SkNMILL, if we have a resource immediately usable in the
stoup $S$, this must be spent for the construction of the first resource
we are required to produce, namely $A$; it cannot be saved for producing $B$ even if producing $A$ needs no resources at all.

The other central ``skew'' aspect of \SkNMILL\ is that the left
rules only act on stoup formulae. This can be understood under the
resources-as-formulae correspondence to say that we are only allowed to
decompose an available resource when it has become immediately consumable (has entered the stoup). We are precluded from decomposing the
resources in the context ahead of their time. 

\section{Categorical Semantics via Skew Monoidal Closed Categories}
\label{sec:catsem}

Next we present a categorical semantics for the sequent calculus of \SkNMILL.

\begin{defn}\label{def:skewcat}
A \emph{(left) skew monoidal closed category} $\mathbb{C}$ is a category with a unit object $\I$ and two functors $\ot : \mathbb{C} \times \mathbb{C} \rightarrow \mathbb{C}$ and $\lolli : \mathbb{C}^{\mathsf{op}} \times \mathbb{C} \rightarrow \mathbb{C}$ forming an adjunction ${-} \ot B \dashv B \lolli {-}$ for all $B$,
and three natural transformations $\lambda$, $\rho$, $\alpha$ typed
  $\lambda_A : \I \ot A \to A$, $\rho_A : A \to A \ot \I$ and $\alpha_{A,B,C} : (A \ot B) \ot C \to A \ot (B \ot C)$,
satisfying the following equations due to Mac Lane \cite{maclane1963natural}:
\begin{center}
\begin{tikzcd}
	& {\I \ot \I} \\[-.2cm]
	\I && \I
	\arrow["{\rho_{\I}}", from=2-1, to=1-2]
	\arrow["{\lambda_{\I}}", from=1-2, to=2-3]
	\arrow[Rightarrow, no head, from=2-1, to=2-3]
\end{tikzcd}
\qquad
\begin{tikzcd}
	{(A \ot \I) \ot B} & {A \ot (\I \ot B)} \\[-.3cm]
	{A \ot B} & {A \ot B}
	\arrow[Rightarrow, no head, from=2-1, to=2-2]
	\arrow["{\rho_A \ot B}", from=2-1, to=1-1]
	\arrow["{A \ot \lambda_{B}}", from=1-2, to=2-2]
	\arrow["{\alpha_{A , \I , B}}", from=1-1, to=1-2]
\end{tikzcd}

\begin{tikzcd}
	{(\I \ot A ) \ot B} && {\I \ot (A \ot B)} \\[-.3cm]
	& {A \ot B}
	\arrow["{\alpha_{\I , A ,B}}", from=1-1, to=1-3]
	\arrow["{\lambda_{A \ot B}}", from=1-3, to=2-2]
	\arrow["{\lambda_{A} \ot B}"', from=1-1, to=2-2]
\end{tikzcd}
\qquad
\begin{tikzcd}
	{(A \ot B) \ot \I} && {A \ot (B \ot \I)} \\[-.3cm]
	& {A \ot B}
	\arrow["{\alpha_{A , B, \I}}", from=1-1, to=1-3]
	\arrow["{A \ot \rho_B}"', from=2-2, to=1-3]
	\arrow["{\rho_{A \ot B}}", from=2-2, to=1-1]
\end{tikzcd}

\begin{tikzcd}
	{(A\ot (B\ot C)) \ot D} && {A \ot ((B \ot C) \ot D)} \\[-.2cm]
	{((A \ot B)\ot C) \ot D} & {(A \ot B) \ot (C \ot D)} & {A \ot (B \ot (C \ot D))}
	\arrow["{\alpha_{A , B\ot C , D}}", from=1-1, to=1-3]
	\arrow["{A \ot \alpha_{B , C ,D}}", from=1-3, to=2-3]
	\arrow["{\alpha_{A ,B ,C\ot D}}"', from=2-2, to=2-3]
	\arrow["{\alpha_{A \ot B , C , D}}"', from=2-1, to=2-2]
	\arrow["{\alpha_{A , B ,C} \ot D}", from=2-1, to=1-1]
\end{tikzcd}
\end{center}
\end{defn}
The notion of skew monoidal closed category admits other equivalent characterizations \cite{street:skew-closed:2013,uustalu:eilenberg-kelly:2020}. Tuples of natural transformations $(\lambda , \rho , \alpha)$ are in bijective correspondence with tuples of (extra)natural transformations $(j, i, L)$ typed
$j_A : \I \to A \lolli A$, $i_A : \I \lolli A \to A$, $L_{A,B,C} : B \lolli C \to (A \lolli B) \lolli (A \lolli C)$.
Moreover, $\alpha$ and $L$ are interdefinable with a natural transformation $\mathsf{p}$ typed $\mathsf{p}_{A , B , C} : (A \ot B) \lolli C \to \linebreak A \lolli (B \lolli C)$, embodying an internal version of the adjunction between $\ot$ and $\lolli$.

\begin{example}[from \cite{uustalu:eilenberg-kelly:2020}]
This example explains how to turn every categorical model of \MILL\ extended with a $\Box$-like modality of necessity (or something like the exponential modality $!$ of linear logic) into a model of \SkNMILL.
Let $(\mathbb{C},\I,\ot,\lolli)$ be a (possibly symmetric) monoidal closed category and let $(D,\varepsilon, \delta)$ be a comonad on $\mathbb{C}$, where $\varepsilon_A : D\,A \to A$ and $\delta_A : D\,A \to D\,(D\,A)$ are the counit and comultiplication of $D$. Suppose the comonad $D$ to be \emph{lax monoidal}, i.e., coming with a map $\mathsf{e} : \I \to D\,I$ and a natural transformation $\mathsf{m}$ typed $\mathsf{m}_{A,B} : D \,A \ot D\,B \to D\,(A \ot B)$ cohering suitably with $\lambda$, $\rho$, $\alpha$, $\varepsilon$ and $\delta$.
Then $\mathbb{C}$ has also a skew monoidal closed structure $(\I, \otd, \lollid)$ given by  $A \otd B = A \ot D\,B$ and $B \lollid C = D\,B \lolli C$. The adjunction ${-} \ot D\,B \dashv D\,B \lolli {-}$ yields an adjunction ${-} \otd B \dashv B \lollid {-}$. The structural laws are
\[
\begin{array}{c}
\lambda^D_A \, = \, \xymatrix@C=3pc{\I \ot D\,A \ar[r]^-{\I \ot \varepsilon_A} & \I \ot A \ar[r]^-{\lambda_A} & A}
\hspace*{1.5cm}
\rho^D_A \, = \, \xymatrix@C=3pc{A \ar[r]^-{\rho_A} & A \ot \I \ar[r]^-{A \ot \mathsf{e}} & A \ot D\,\I}
\\
\begin{array}{rcl}
\alpha^D_{A,B,C} & = & \xymatrix@C=5pc{(A \ot D\,B) \ot D\,C
                   \ar[r]^-{(A \ot D B) \ot \delta_C}
                   & (A \ot D\,B) \ot D\,(D\,C)} \\
& & \hspace*{1.5cm} \xymatrix@C=5pc{\ar[r]^-{\alpha_{A,DB,D(DC)}}
                  & A \ot (D\,B \ot D\,(D\,C)) \ar[r]^-{A \ot \mathsf{m}_{B,DC}}
                  & A \ot D\,(B \ot D\,C)}
\end{array}
\end{array}
\]
$(\mathbb{C},\I,\otd,\lollid)$  is a ``genuine'' skew monoidal closed category, in the sense that $\lambda^D$, $\rho^D$ and $\alpha^D$ are all generally non-invertible.
\end{example}

\begin{defn}
  A \emph{(strict) skew monoidal closed functor} $F : \mathbb{C} \rightarrow \mathbb{D}$ between skew monoidal closed categories $(\mathbb{C} , \I , \ot , \lolli)$ and $(\mathbb{D} , \I' , \ot' , \lolli')$ is a functor from $\mathbb{C}$ to $\mathbb{D}$ satisfying
    $F \I = \I'$, $F (A \ot B) = \linebreak F A \ot' F B$ and
    $F(A \lolli B) = F A \lolli' F B$, also preserving the structural laws $\lambda$, $\rho$ and $\alpha$ on the nose.
\end{defn}

The formulae, derivations and the equivalence relation $\circeq$ of the sequent calculus for \SkNMILL\ determine a skew monoidal closed category $\FSkMCC(\mathsf{At}$).
\begin{defn}\label{def:fskmcc}
  The skew monoidal closed category $\FSkMCC(\mathsf{At})$ has
  as objects formulae; the operations $\I$, $\ot$ and $\lolli$ are the logical connectives. The set of maps between objects $A$ and $B$ is the set of derivations $A \mid ~ \vdash B$ quotiented by the equivalence relation $\circeq$. The identity map on $A$ is the equivalence class of $\ax_A$, while composition is given by $\mathsf{scut}$. The structural laws $\lambda$, $\rho$, $\alpha$ are given by derivations in (\ref{eq:lra}). 
\end{defn}
 This is a good definition since all equations of a skew monoidal closed category turn out to hold.

Skew monoidal closed categories with given interpretations of atoms into them constitute models of the sequent calculus of \SkNMILL, in the sense specified by the following theorem.
\begin{theorem}\label{thm:models}
  Let $\mathbb{D}$ be a skew monoidal closed category. Given $F_{\mathsf{At}} : \mathsf{At} \rightarrow |\mathbb{D}|$ providing evaluation of atomic formulae as objects of $\mathbb{D}$, there exists a skew monoidal closed functor $F : \FSkMCC(\mathsf{At}) \rightarrow \mathbb{D}$.
\end{theorem}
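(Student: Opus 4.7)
The plan is to define $F$ by mutual induction on formulae and derivations. On objects, set $F\,X = F_{\mathsf{At}}(X)$ on atoms and extend homomorphically: $F\,\I = \I'$, $F(A \ot B) = F\,A \ot' F\,B$, $F(A \lolli B) = F\,A \lolli' F\,B$, so that strict preservation of the connectives on objects holds by construction. Maps in $\FSkMCC(\mathsf{At})$ are equivalence classes $[f]$ of derivations $f : A \mid~ \vdash B$, so to interpret them I first extend $F$ to arbitrary sequents. Stoups and contexts are read via the antecedent-tensor construction of (\ref{eq:inter:ante}): a derivation $f : S \mid \Gamma \vdash C$ becomes a morphism $F\,f : F\,\ldbc S \mid \Gamma \rdbc_{\ot} \to F\,C$ in $\mathbb{D}$, defined by recursion on $f$. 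Each inference rule translates into a standard combinator built from the skew structure of $\mathbb{D}$: $\ax$ is the identity; $\unitl$ precomposes with $\lambda$; $\tl$ precomposes with $\alpha$; $\pass$ is obtained via $\lambda \ot' {-}$; $\unitr$ is the identity on $\I'$; $\tr$ glues two interpretations using $\alpha^{-1}$ is \emph{not} available, but $\alpha$ in the correct direction together with bifunctoriality of $\ot'$ suffices; $\lright$ transposes across the adjunction ${-} \ot' B \dashv B \lolli' {-}$; and $\lleft$ combines a transpose with $\alpha$ and the counit (evaluation) of that adjunction.

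The main technical step is to verify that whenever $f \circeq g$ one has $F\,f = F\,g$ in $\mathbb{D}$. This amounts to a case analysis over the generating pairs in~(\ref{fig:circeq}). The three $\eta$-laws for $\ax$ on non-atomic formulae reduce, after unfolding, to the Mac Lane unit coherences of Definition~\ref{def:skewcat} and to the triangle identities of the adjunction. Of the eight permutative conversions, the four involving only $\unitl, \tl, \pass, \tr$ reduce to naturality of $\lambda$ and $\alpha$ together with bifunctoriality of $\ot'$ and the pentagon; the four remaining ones, which mix $\lright$ (respectively $\lleft$) with $\pass$, $\unitl$, $\tl$, $\tr$, reduce to naturality of the transpose and to the defining equations of the adjunction. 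I expect these last four cases to be the main obstacle, since they require carefully untangling nested transposes and evaluations; in practice it is cleaner to express $\lright$ and $\lleft$ in terms of the internal-hom data $(j,i,L)$ or $\mathsf{p}$ mentioned after Definition~\ref{def:skewcat}, and then appeal to the axioms formulated with that data.

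Once $F$ is shown to respect $\circeq$, restricting to derivations of the shape $A \mid~ \vdash B$ yields a well-defined action on the hom-sets of $\FSkMCC(\mathsf{At})$. Functoriality follows from two facts: $F\,\ax_A = \id_{F\,A}$, which is immediate; and $F(\mathsf{scut}(f,g)) = F\,g \circ F\,f$, which I would prove by induction on $f$, using the $\circeq$-equations satisfied by $\mathsf{scut}$ (cf.\ \cite[Figures 5 and 6]{uustalu:sequent:2021}) together with the already-established respect of $F$ for $\circeq$. Strict preservation of $\I$, $\ot$ and $\lolli$ is by construction, and strict preservation of $\lambda$, $\rho$, $\alpha$ on the nose follows by direct computation: the interpretations of the three derivations displayed in~(\ref{eq:lra}) simplify, after cancelling identity and unit-law applications coming from $\ax$, $\pass$ and $\unitl$, to the structural maps $\lambda$, $\rho$, $\alpha$ of $\mathbb{D}$. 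This produces the required skew monoidal closed functor $F : \FSkMCC(\mathsf{At}) \rightarrow \mathbb{D}$.
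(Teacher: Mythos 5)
Your proposal takes essentially the same route as the paper: interpret formulae homomorphically, read antecedents via the encoding $\ldbc S \mid \Gamma \rdbc_{\ot}$ transported to $\mathbb{D}$, define the interpretation of derivations by recursion on the last rule using the skew structural data and the adjunction, then restrict to sequents $A \mid~ \vdash B$ and check preservation. You in fact spell out more than the paper does (respect for $\circeq$, functoriality via $\mathsf{scut}$, on-the-nose preservation of $\lambda,\rho,\alpha$); the only cosmetic discrepancies are that under the paper's left-nested encoding $\unitl$ and $\tl$ are interpreted by identities rather than by $\lambda$ and $\alpha$, and the context-splitting map needed for $\tr$ uses $\rho$ as well as $\alpha$.
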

\begin{proof}
  Let $(\mathbb{D} , \I' , \ot' , \lolli')$ be a skew monoidal closed category.
  The action on object $F_0$ of the functor $F$ is defined by induction on the input formula:
  \begin{equation*}
    F_0X = F_{\mathsf{At}}X
    \qquad
    F_0\I = \I'
    \qquad
    F_0(A \ot B) = F_0A \ot' F_0B
    \qquad
    F_0(A \lolli B) = F_0A \lolli' F_0B
  \end{equation*}
  The encoding of antecedents as formulae $\ldbc S \mid \Gamma \rdbc_{\ot}$, introduced immediately after (\ref{eq:inter:ante}), can be replicated also in $\mathbb{D}$ by simply replacing $\I$ and $\ot$ with $\I'$ and $\ot'$ in the definition, where now $S$ is an optional object and $\Gamma$ is a list of objects of $\mathbb{D}$. Using this encoding, it is possible to show that each rule in (\ref{eq:seqcalc}) is derivable in $\mathbb{D}$. As an illustrative case, consider the rule $\pass$. Assume given a map $f : \ldbc A \mid \Gamma \rdbc_{\ot'} \to C$ in $\mathbb{D}$. Then, assuming $\Gamma = A_1,\dots,A_n$, we can define the passivation of $f$ typed $\ldbc {-} \mid A, \Gamma \rdbc_{\ot'} \to C$ as
\[\xymatrixcolsep{7pc}
\xymatrix{
  (\dots ((\I' \ot' A_1) \ot' A_2) \dots ) \ot' A_n
  \ar[r]^-{(\dots (\lambda'_{A_1} \ot' A_2) \dots ) \ot' A_n} &
  (\dots (A_1 \ot' A_2) \dots ) \ot' A_n
  \ar[r]^-{f} &
  C
}
\]
This implies the existence of a function $F_1$, sending each derivation $f : S \mid \Gamma \vdash A$ to a map $F_1f : \linebreak F_0(\ldbc S \mid \Gamma \rdbc_{\ot}) \to FA$ in $\mathbb{D}$, defined by induction on the derivation $f$. When restricted to sequents of the form $A \mid ~ \vdash B$, the function $F_1$ provides the action on maps of $F$.
It is possible to show that $F$ preserves the skew monoidal closed structure, so it is a skew monoidal closed functor.
\end{proof}

Moreover, the sequent calculus as a presentation of a skew monoidal closed category is the \emph{initial} one among these
models, or, equivalently, $\FSkMCC(\mathsf{At})$ is the \emph{free}
such category on the set $\mathsf{At}$.

\begin{theorem}\label{thm:unique}
  The skew monoidal closed functor $F$ constructed in Theorem \ref{thm:models} 
  is the unique one for which $F_0 X = F_{\mathsf{At}} X$ for any atom $X$.
\end{theorem}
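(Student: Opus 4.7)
The plan is to split the uniqueness claim into an object part and a morphism part, and in each case exploit how severely strictness constrains the functor. I fix another strict skew monoidal closed functor $G : \FSkMCC(\mathsf{At}) \to \mathbb{D}$ satisfying $G_0 X = F_{\mathsf{At}} X$ on atoms. By induction on the formula, the strictness equations $G_0 \I = \I'$, $G_0(A \ot B) = G_0 A \ot' G_0 B$ and $G_0(A \lolli B) = G_0 A \lolli' G_0 B$ force $G_0 = F_0$ on every formula.

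For morphisms, every map of $\FSkMCC(\mathsf{At})$ is the $\circeq$-equivalence class of a derivation $f : A \mid ~ \vdash B$. The interpretation $F_1 f$ built in the proof of Theorem \ref{thm:models} is, by its inductive definition, a composite whose only ingredients are identities, composition, the functors $\ot'$ and $\lolli'$ acting on morphisms, the structural laws $\lambda', \rho', \alpha'$, and the unit/counit of the adjunction ${-}\ot' B \dashv B \lolli' {-}$. All of these are preserved on the nose by any strict skew monoidal closed functor: the tensor, the internal hom and the structural laws by assumption, and the associated adjunction (equivalently, the internal version $\mathsf{p}$, interdefinable with $\alpha$) as an immediate consequence. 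Hence the very same inductive recipe, replayed with the data of $G$ in place of that of $F$, produces the same composite as a candidate for $G_1 f$.

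The main obstacle is to close the gap between this candidate and $G_1 f$ itself, since a priori $G$ might assign different values to derivations. To handle this I would define $\widetilde{G}_1 f : G_0 \ldbc S \mid \Gamma \rdbc_\ot \to G_0 A$ on general sequents by replaying the inductive recipe of Theorem \ref{thm:models} using $\I', \ot', \lolli', \lambda', \rho', \alpha'$ of $\mathbb{D}$, and then verify by induction on $f$ that on derivations of the form $A \mid ~ \vdash B$ the value $\widetilde{G}_1 f$ coincides with the action of $G$ on the morphism $[f]_\circeq$. Each inductive case reduces to the preservation by $G$ of one of the structural ingredients listed above, so the verification is rule by rule and routine. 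Finally, since $F_0 = G_0$ and identical recipes are applied to identical inputs, $\widetilde{F}_1 f = \widetilde{G}_1 f$ for every $f$, yielding $F = G$; well-definedness on $\circeq$-classes is automatic because $G$ is already a functor on $\FSkMCC(\mathsf{At})$.
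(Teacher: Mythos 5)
Your argument is correct in outline, but it is a genuinely different route from the one the paper takes. The paper does not prove Theorem~\ref{thm:unique} directly: it defers to the Hilbert-style calculus introduced right after the theorem, in which every map of $\FSkMCC(\mathsf{At})$ is \emph{by construction} a composite of $\id$, $\comp$, $\ot$ and $\lolli$ applied to maps, $\lambda$, $\rho$, $\alpha$, $\pi$ and $\pi^{-1}$; once the bijection (up to $\doteq$ resp.\ $\circeq$) between Hilbert-style and sequent calculus derivations is in place, any strict skew monoidal closed functor agreeing with $F$ on atoms is forced to agree with $F$ on every such composite, and uniqueness is immediate. Your proposal instead reruns the sequent-calculus induction of Theorem~\ref{thm:models} for the competitor $G$ and compares the two interpretations rule by rule. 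This buys a self-contained argument that avoids setting up the second calculus and the translation between them, at the cost of redoing inside $\FSkMCC(\mathsf{At})$ the coherence bookkeeping that the Hilbert presentation gives for free.

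Two points in your sketch need tightening. First, the induction cannot be stated only ``on derivations of the form $A \mid~\vdash B$'': the premises of the rules are general sequents $S \mid \Gamma \vdash C$, so the inductive invariant must be that $\widetilde{G}_1 f$ equals $G$ applied to the morphism $\ldbc S \mid \Gamma \rdbc_\ot \to C$ of $\FSkMCC(\mathsf{At})$ canonically associated to $f$ (via the invertible $\mathsf{L}^\star$), and each case then uses that the syntactic rule, read as an operation on such morphisms, is $\circeq$-equal to the categorical recipe of Theorem~\ref{thm:models} computed in the free category itself. Second, the cases for $\lright$ and $\lleft$ need $G$ to preserve the adjunction data (the unit and counit of ${-}\ot B \dashv B \lolli {-}$, equivalently $\mathsf{p}$); this does not follow from preservation of $\alpha$ alone, since $\mathsf{p}$ is interdefinable with $\alpha$ only jointly with $L$, so you must read it as part of what ``strict skew monoidal closed functor'' preserves. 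With these repairs the direct argument goes through.
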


Alternatively, following the strategy of our previous papers \cite{uustalu:sequent:2021,uustalu:proof:nodate,uustalu:deductive:nodate,veltri:coherence:2021}, this result can be shown by going via a Hilbert-style deductive system that directly presents the free skew monoidal closed category on $\mathsf{At}$. Its formulae are the same of the sequent calculus, its sequents are pairs $A \Rightarrow B$, with $A$ and $B$ single formulae. Derivations are generated by the following rules, in which one can  recognize all the structure of Definition \ref{def:skewcat} (in particular $\pi$ and $\pi^{-1}$ are the adjunction data).
\begin{displaymath}
  \def\arraystretch{1.5}
  \begin{array}{c}
        \infer[\id]{A \Rightarrow A}{}
        \quad
        \infer[\mathsf{comp}]{A \Rightarrow C}{
          A \Rightarrow B
          &
          B \Rightarrow C
        }
        \qquad
      \infer[\otimes]{A \ot B \Rightarrow C \ot D}{
        A \Rightarrow C
        &
        B \Rightarrow D
      }
      \qquad
      \infer[\lolli]{A \lolli B \Rightarrow C \lolli D}{
        C \Rightarrow A
        &
        B \Rightarrow D
      }
      \\
      \infer[\lambda]{\I \ot A \Rightarrow A}{}
      \quad
      \infer[\rho]{A \Rightarrow A \ot \I}{}
      \quad
      \infer[\alpha]{(A \ot B) \ot C \Rightarrow A \ot (B \ot C)}{}
      \qquad
      \infer[\pi]{A \Rightarrow B \lolli C}{A \ot B \Rightarrow C}
      \quad
      \infer[\pi^{-1}]{A \ot B \Rightarrow C}{A \Rightarrow B \lolli C}
  \end{array}
\end{displaymath}
There exists a congruence relation $\doteq$ on each set of derivations $A \Rightarrow B$ with generators directly encoding the equations of the theory of skew monoidal closed categories. With the Hilbert-style calculus in place, one can show that there is a bijection between the set of derivations of the sequent $A \mid ~ \vdash B$ modulo $\circeq$ and the set of derivations of the sequent $A \Rightarrow B$ modulo $\doteq$.

\section{Proof-Theoretic Semantics via Focusing}
\label{sec:focus}

The equivalence relation $\circeq$ from (\ref{fig:circeq}) can also be viewed as an abstract rewrite system, by orienting every equation from left to right. The resulting rewrite system is locally confluent and strongly normalizing, thus confluent with unique normal forms. Derivations in normal form thus correspond to canonical representatives of $\circeq$-equivalence classes.
These representatives can be organized in a \emph{focused sequent calculus} in the sense of \cite{andreoli:logic:1992}, which describes, in a declarative fashion, a root-first proof search strategy for the (original, unfocused) sequent calculus.

\subsection{A First (Na{\"i}ve) Focused Sequent Calculus}
As a first attempt to focusing, we na{\"i}vely merge together the rules of the focused sequent calculi of skew monoidal categories \cite{uustalu:sequent:2021} and skew prounital closed categories \cite{uustalu:deductive:nodate}. In the resulting calculus, sequents have one of 4 possible subscript annotations, corresponding to 4 different phases of proof search: $\RI$ for `right invertible', $\LI$ for `left invertible`, $\Pass$ for `passivation` and $\F$ for `focusing`. We will see soon that this focused sequent calculus is too permissive, in the sense that two distinct derivations in the focused system can correspond to $\circeq$-equivalent sequent calculus derivations.

\begin{equation}\label{eq:naive}
  \begin{array}{lc}
    \text{(right invertible)} & 
    \proofbox{
    \infer[\lright]{S \mid \Gamma \vdash_{\RI} A \lolli B}{S \mid \Gamma , A \vdash_{\RI} B}
    \qquad
    \infer[\LI 2 \RI]{S \mid \Gamma \vdash_{\RI} P}{S \mid \Gamma \vdash_{\LI} P}
    }
    \\[9pt]
    \text{(left invertible)} & 
    \proofbox{
    \infer[\unitl]{\I \mid \Gamma \vdash_{\LI} P}{{-} \mid \Gamma \vdash_{\LI} P}
    \qquad
    \infer[\tl]{A \ot B \mid \Gamma \vdash_{\LI} P}{A \mid B , \Gamma \vdash_{\LI} P}
    \qquad
    \infer[\Pass 2 \LI]{T \mid \Gamma \vdash_{\LI} P}{T \mid \Gamma \vdash_{\Pass} P}
    }
    \\[9pt]
    \text{(passivation)} & 
    \proofbox{
      \infer[\pass]{{-} \mid A , \Gamma \vdash_{\Pass} P }{
      A\mid \Gamma \vdash_{\LI} P
    }
    \qquad
    \infer[\F 2 \Pass]{T \mid \Gamma \vdash_{\Pass} P}{
      T \mid \Gamma \vdash_{\F} P
    }
    }
    \\
    \text{(focusing)} &    \\ 
    \multicolumn{2}{c}{
    \hspace*{3mm}
    \infer[\ax]{X \mid \quad \vdash_{\F} X}{}
    \quad
    \infer[\unitr]{{-} \mid \quad \vdash_{\F} \I}{}
    \quad
    \infer[\tr]{T \mid \Gamma , \Delta \vdash_{\F} A \ot B}{
      T \mid \Gamma \vdash_{\RI} A
      &
      {-} \mid \Delta \vdash_{\RI} B
    }
    \quad
    \infer[\lleft]{A \lolli B \mid \Gamma , \Delta \vdash_{\F} P}{
      {-} \mid \Gamma \vdash_{\RI} A
      &
      B \mid \Delta \vdash_{\LI} P
    }
    }
  \end{array}
\end{equation}
In the rules above and the rest of the paper, the metavariable $P$ denotes a \emph{positive} formula, i.e. $P \not= A \lolli B$, while metavariable $T$ indicates a \emph{negative} stoup, i.e. $T \not= \I$ and $T\not= A \ot B$ ($T$ can also be empty).

We explain the rules of the focused sequent calculus from the perspective of root-first proof search. The starting phase is `right invertible` $\RI$. 
\begin{itemize}
\item[($\vdash_\RI$)] We repeatedly apply the right invertible rule $\lright$ with the goal of reducing the succedent to a positive formula $P$.
  When the succedent formula becomes positive, we move to phase $\LI$ via $\LI2\RI$.
\item[($\vdash_\LI$)] We repeatedly destruct the stoup formula via application of left invertible rules $\tl$ and $\unitl$ with the goal of making it negative. When this happens, we move to phase $\Pass$ via $\Pass2\LI$.
\item[($\vdash_\Pass$)] We have the possibility of applying the passivation rule and move the leftmost formula $A$ in the context to the stoup when the latter is empty. This allows us to start decomposing $A$ using left invertible rules in phase $\LI$. Otherwise, we move to phase $\F$ via $\F2\Pass$.
\item[($\vdash_\F$)] We apply one of the four remaining rules $\ax$, $\unitr$, $\tr$ or $\lleft$. The premises of $\tr$ are both in phase $\RI$ since $A$ and $B$ are generic formulae, in particular they could be implications. The first premise of $\lleft$ is in phase $\RI$ for the same reason while the second premise is in $\LI$ because the succedent formula $P$ is positive.
\end{itemize}

The focused calculus in (\ref{eq:naive}) is sound and complete wrt.\ the sequent calculus in (\ref{eq:seqcalc}) in regards to derivability, but not \emph{equationally complete}, i.e., there exist $\circeq$-equivalent sequent calculus derivations which have multiple distinct derivations using the rules in (\ref{eq:naive}). In other words, the rules in (\ref{eq:naive}) are too permissive. They facilitate two forms of non-determinism in root-first proof search that should not be there.

\begin{enumerate}

\item[(i)] The first premise of the $\tr$ rule is in phase $\RI$, since $A$ is potentially an implication which the invertible right rule $\lright$ could act upon. Proof search for the first premise eventually hits phase $\Pass$, when we have the possibility of applying the $\pass$ rule if the stoup is empty. This implies the existence of situations where either of the rules $\tr$ and $\pass$ can be applied first, in both cases resulting in valid focused derivations.
  As an example, consider the two distinct derivations of ${-} \mid X , \Gamma , \Delta \vdash_{\Pass} P \ot C$ under assumptions $f : X \mid \Gamma \vdash_{\LI} P$ and $g : {-} \mid \Delta \vdash_{\RI} C$.
  \begin{equation}\label{eq:passtr}
  \small
    \proofbox{
      \infer[\pass]{{-} \mid X , \Gamma , \Delta \vdash_{\Pass} P \ot C}{
        \infer[\mathsf{sw}]{X \mid \Gamma , \Delta \vdash_{\LI} P \ot C}{
          \infer[\tr]{X \mid \Gamma , \Delta \vdash_{\F} P \ot C}{
            \infer[\mathsf{sw}]{X \mid \Gamma \vdash_{\RI} P}{
              \deduce{X \mid \Gamma \vdash_{\LI} P}{f}
            }
            &
            \deduce{{-} \mid \Delta \vdash_{\RI} C}{g}
          }
        }
      }
    \qquad
    \infer[\mathsf{sw}]{{-} \mid X , \Gamma , \Delta \vdash_{\Pass} P \ot C}{
      \infer[\tr]{{-} \mid X , \Gamma , \Delta \vdash_{\F} P \ot C}{
        \infer[\mathsf{sw}]{{-} \mid X , \Gamma \vdash_{\RI} P}{
          \infer[\pass]{{-} \mid X , \Gamma \vdash_{\Pass} P}{
            \deduce{X \mid \Gamma \vdash_{\LI} P}{f}
          }
        }
        &
        \deduce{{-} \mid \Delta \vdash_{\RI} C}{g}
      }
    }
    }
  \end{equation}
  Here and in the rest of the paper the rule $\mathsf{sw}$ above stands for a sequence of (appropriately typed) phase
  switching inferences by $\LI2\RI$, $\Pass2\LI$ and $\F2\Pass$.
  The corresponding sequent calculus derivations are equated by congruence relation $\circeq$ because of the 4th equation from (\ref{fig:circeq}), i.e., the permutative conversion involving $\tr$ and $\pass$.

  \item[(ii)] Rules $\tr$ and $\lleft$ appear in the same phase $\F$, though there are situations where both rules can be applied first, which can lead to two distinct focused derivations. More precisely, there are cases when $\tr$ and $\lleft$ can be interchangeably applied. As an example, consider the following two valid derivations of $A \lolli X \mid \Gamma , \Delta , \Lambda \vdash_{\F} P \ot D$ under the assumption of $f : {-} \mid \Gamma \vdash_{\RI} A$, $g : X \mid \Delta \vdash_{\LI} P$ and $h : {-} \mid \Lambda \vdash_{\RI} D$.
    \begin{equation}\label{eq:llefttr}
\small
      \hspace{-.5cm}
      \proofbox{
      \infer[\lleft]{A \lolli X \mid \Gamma , \Delta , \Lambda \vdash_{\F} P \ot D}{
        \deduce{{-} \mid \Gamma \vdash_{\RI} A}{f}
        &\hspace{-.6cm}
        \infer[\mathsf{sw}]{X \mid \Delta , \Lambda \vdash_{\LI} P \ot D}{
          \infer[\tr]{X \mid \Delta , \Lambda \vdash_{\F} P\ot D}{
            \infer[\mathsf{sw}]{X \mid \Delta \vdash_{\RI} P}{
              \deduce{X \mid \Delta \vdash_{\LI} P}{g}
            }
            &
            \deduce{{-} \mid \Lambda \vdash_{\RI} D}{h}
          }
        }
      }
      \hspace{.2cm}
      \infer[\tr]{A \lolli X \mid \Gamma , \Delta , \Lambda \vdash_{\F} P \ot D}{
        \infer[\mathsf{sw}]{A \lolli X \mid \Gamma , \Delta \vdash_{\RI} P}{
          \infer[\lleft]{A \lolli X \mid \Gamma , \Delta \vdash_{\F} P}{
            \deduce{{-} \mid \Gamma \vdash_{\RI} A}{f}
            &
            \deduce{X \mid \Delta \vdash_{\LI} P}{g}
          }
        }
        &\hspace{-.6cm}
        \deduce{{-} \mid \Lambda \vdash_{\RI} D}{h}
      }
    }
    \end{equation}
    The corresponding sequent calculus derivations, at the same time, are $\circeq$-equivalent because of the 7th equation from (\ref{fig:circeq}), the permutative conversion for $\tr$ and $\lleft$.
\end{enumerate}

To get rid of type (i) undesired non-determinism, one might try an idea similar to the one that works in the skew monoidal non-closed case \cite{uustalu:sequent:2021}, namely, to prioritize $\pass$ over $\tr$ by requiring the first premise of the latter to be a sequent in phase $\F$.
But this does not do the right thing in the skew monoidal closed case. E.g., the sequent ${-} \mid Y \vdash_{\F} (X \lolli X) \ot Y$ becomes underivable while its counterpart is derivable in (\ref{eq:seqcalc}).
\begin{equation*}\label{eq:counterexample0}
  \infer[\tr]{{-} \mid Y \vdash_{\F} (X \lolli X) \ot Y}{
    \deduce{{-} \mid \quad \vdash_{\F} X \lolli X}{??}
    &
    \infer[\mathsf{sw}]{{-} \mid Y \vdash_{\RI} Y}{
      \infer[\pass]{{-} \mid Y \vdash_{\Pass} Y}{
        \infer[\mathsf{sw}]{Y \mid \quad \vdash_{\LI} Y}{
          \infer[\ax]{Y \mid \quad \vdash_{\F} Y}{}
        }
      }
    }
  }
\end{equation*}

An impulsive idea for eliminating undesired non-determinism of type (ii) is to prioritize the application of $\lleft$ over $\tr$, e.g., by forcing the application of $\lleft$ in phase $\F$ whenever the stoup formula is an implication and restricting the application of $\tr$ to sequents where the stoup is empty or atomic. This too leads to an unsound calculus, since the sequent $X \lolli Y \mid Z \vdash_{\F} (X \lolli Y) \ot Z$, which has a derivable correspondent in (\ref{eq:seqcalc}), would not be derivable by first applying the $\lleft$ rule.
\begin{equation*}\label{eq:counterexample1}
   \infer[\lleft]{X \lolli Y \mid Z \vdash_{\F} (X \lolli Y) \ot Z}{
     \deduce{{-} \mid \quad \vdash_{\RI} X}{??}
     &
     \infer[\mathsf{sw}]{Y \mid Z \vdash_{\LI} (X \lolli Y) \ot Z}{
       \infer[\tr]{Y \mid Z \vdash_{\F} (X \lolli Y) \ot Z}{
         \infer[\lright]{Y \mid \quad \vdash_{\RI} X \lolli Y}{
           \deduce{Y \mid X \vdash_{\RI} Y}{??}
         }
         &
         \infer[\mathsf{sw}]{{-} \mid Z \vdash_{\RI} Z}{
           \infer[\pass]{{-} \mid Z \vdash_{\Pass} Z}{
             \infer[\mathsf{sw}]{Z \mid \quad \vdash_{\LI} Z}{
               \infer[\ax]{Z \mid \quad \vdash_{\F} Z}{}
             }
           }
         }
       }
     }
   }
\end{equation*}
Dually, prioritizing the application of $\tr$ over $\lleft$ leads to similar issues, e.g., the sequent $X \lolli (Y \ot Z) \mid X \vdash_\F Y \ot Z$ would not be derivable by first applying the $\tr$ rule while its counterpart is derivable in  (\ref{eq:seqcalc}).
\begin{equation*}\label{eq:counterexample2}
\infer[\tr]{X \lolli (Y \ot Z) \mid X \vdash_{\F} Y \ot Z}{
  \infer[\mathsf{sw}]{X \lolli (Y \ot Z) \mid X \vdash_{\RI} Y}{
    \infer[\lleft]{X \lolli (Y \ot Z) \mid X \vdash_{\F} Y}{
      \infer[\mathsf{sw}]{{-}  \mid X \vdash_{\RI} X}{
        \infer[\pass]{{-}  \mid X \vdash_{\Pass} X}{
          \infer[\mathsf{sw}]{X \mid \quad \vdash_{\LI} X}{
            \infer[\ax]{X \mid \quad \vdash_{\F} X}{}
          }
        }
      }
      &
      \infer[\tl]{Y \ot Z \mid \quad \vdash_{\LI} Y}{
        \deduce{Y \mid Z \vdash_{\LI} Y}{??}
      }
    }
  }
  &
  \deduce{{-}  \mid \quad \vdash_{\RI} Z}{??}
}
\end{equation*}

\subsection{A Focused System with Tag Annotations}\label{sec:tag}

In order to eliminate undesired non-determinism of type (i) between $\pass$ and $\tr$, we need to restrict applications of $\pass$ in the derivation of the first premise of an application of $\tr$.
One way to achieve this is to force that such an application of $\pass$ is allowed only if the leftmost formula of the context is \emph{new}, in the sense that it was not already present in the context before the $\tr$ application. For example, with this restriction in place, the application of $\pass$ in the 2nd derivation of (\ref{eq:passtr}) would be invalid, since the formula $X$ was already present in context before the application of $\tr$.

Analogously, undesired non-determinism of type (ii) between $\lleft$ and $\tr$ can be eliminated by restricting applications of $\lleft$ after an application of $\tr$. This can be achieved by forcing the subsequent application of $\lleft$ to split the context into two parts $\Gamma,\Delta$ in such a way that $\Gamma$, i.e., the context of the first premise, necessarily contains some \emph{new} formula occurrences that were not in the context before the first $\tr$ application. Under this restriction, the application of $\lleft$ in the 2nd derivation of (\ref{eq:llefttr}) would become invalid, since all formulae in $\Gamma$ are already present in context before the application of $\tr$.

One way to distinguish between old and new formulae occurrences in the above cases is to mark with a \emph{tag} $\bullet$ each new formula appearing in context during the building of a focused derivation. We christen a formula occurrence ``new'' whenever it is moved from the succedent to the context via an application of the right implication rule $\lright$. In order to remember when we are building a derivation of a sequent arising as the first premise of $\tr$, in which the distinction between old and new formula is relevant, we mark such sequents with a tag $\bullet$ as well.
More generally, we write $S \mid \Gamma \xvdash_{ph} C$ for a sequent that can be untagged or tagged, i.e., the turnstile can be of the form $\vdash_{ph}$ or $\vdash^\bullet_{ph}$, for $ph \in \{ \RI,\LI,\Pass,\F\}$. This implies that there are a total of eight sequent phases, corresponding to the possible combinations of four subscript phases with the untagged/tagged state.
In tagged sequents  $S \mid \Gamma \vdash_{ph}^{\bullet} C$, the formulae in the context $\Gamma$ can be untagged or tagged, i.e., they can be of the form $A$ or $A^\bullet$; to be precise, all untagged formulae in $\Gamma$ must precede all tagged formulae (i.e., the context splits into untagged and tagged parts and, instead of possibly tagged formulae, we could alternatively work with contexts with two compartments). The formulae in the context of an untagged sequent $S \mid \Gamma \vdash_{ph} C$ must all be untagged (or, alternatively, the tagged compartment must be empty). Given a context $\Gamma$, we write $\Gamma^{\circ}$ for the same context where all tags have been removed from the formulae in it.

Derivations in the focused sequent calculus with tag annotations are generated by the rules
\begin{equation}\label{eq:focus}
  \begin{array}{lc}
    \text{(right invertible)} & 
    \proofbox{
      \infer[\lright]{S \mid \Gamma \vdash^{x}_{\RI} A \lolli B}{S \mid \Gamma , A^{x} \vdash^{x}_{\RI} B}
    \qquad
    \infer[\LI 2 \RI]{S \mid \Gamma \vdash^{x}_{\RI} P}{S \mid \Gamma \vdash^{x}_{\LI} P}
    }
    \\[10pt]
    \text{(left invertible)} & 
    \proofbox{
      \infer[\unitl]{\I \mid \Gamma \vdash_{\LI} P}{{-} \mid \Gamma \vdash_{\LI} P}
    \qquad
    \infer[\tl]{A \ot B \mid \Gamma \vdash_{\LI} P}{A \mid B , \Gamma \vdash_{\LI} P}
    \qquad
    \infer[\Pass 2 \LI]{T \mid \Gamma \xvdash_{\LI} P}{T \mid \Gamma \xvdash_{\Pass} P}
    }
    \\[10pt]
    \text{(passivation)} &
    \proofbox{
    \infer[\pass]{{-} \mid A^{x} , \Gamma \xvdash_{\Pass} P }{
      A\mid \Gamma^{\circ} \vdash_{\LI} P
    }
    \qquad
    \infer[\F 2 \Pass]{T \mid \Gamma \xvdash_{\Pass} P}{
      T \mid \Gamma \xvdash_{\F} P 
    }
    }
    \\[10pt]
    \text{(focusing)} &    
    \proofbox{\infer[\ax]{X \mid \quad \xvdash_{\F} X}{}
    \qquad
    \infer[\unitr]{{-} \mid \quad \xvdash_{\F} \I}{}
    }
    \\[6pt]
    \multicolumn{2}{c}{
    \infer[\tr]{T \mid \Gamma , \Delta \xvdash_{\F} A \ot B}{
      T \mid \Gamma^{\circ} \vdash^{\bullet}_{\RI} A
      &
      {-} \mid \Delta^{\circ} \vdash_{\RI} B
    }
    \qquad
    \infer[\lleft]{A \lolli B \mid \Gamma , \Delta \xvdash_{\F} P}{
      {-} \mid \Gamma^{\circ} \vdash_{\RI} A
      &
      B \mid \Delta^{\circ} \vdash_{\LI} P
      &
      x = \bullet \supset \bullet \in \Gamma
    }
    }
  \end{array}
\end{equation}
Remember that $P$ is a positive formula and $T$ is a negative stoup. The side condition in rule $\lleft$ reads: if $x = \bullet$, then some formula in $\Gamma$ must be tagged. For the rule $\pass$ notice that, if $x = \bullet$, it is actually forced that all formulae of $\Gamma$ are tagged since the preceding context formula $A^\bullet$ is tagged. For the rules $\tr$ and $\lleft$ similarly notice that, 
if some formula of $\Gamma$ is tagged, then all formulae of $\Delta$ must be tagged. 

The rules in (\ref{eq:focus}), when stripped of all the tags, are equivalent to the rules in the na{\"i}ve calculus (\ref{eq:naive}). When building a derivation of an untagged sequent $S \mid \Gamma \vdash_\RI A$, the only possible way to enter a tagged phase is via an application of the $\tr$ rule, so that sequents with turnstile marked $\vdash_{ph}^\bullet$ denote the fact that we are performing proof search for the first premise of an $\tr$ inference (and the stoup is negative).
The search for a proof of a tagged sequent $T \mid \Gamma^\circ \vdash^\bullet_\RI A$ proceeds as follows:
\begin{itemize}
\item[($\vdash^\bullet_\RI$)] We eagerly apply the right invertible rule $\lright$ with the goal of reducing $A$ to a positive formula $P$. All formulae that get moved to the right end of the context are ``new'', and are therefore marked with $\bullet$.
  When the succedent formula becomes positive, we move to the tagged $\LI$ phase via $\LI2\RI$.
\item[($\vdash^\bullet_\LI$)] Since $T$ is a negative stoup, we can only move to the tagged $\Pass$ phase via $\Pass2\LI$.
\item[($\vdash^\bullet_\Pass$)] If the stoup is empty, we have the possibility of applying the $\pass$ rule and move the leftmost formula $A$ in the context to the stoup, but only when this formula is marked by $\bullet$. This restriction makes it possible to remove undesired non-determinism of type (i). We then strip the context of all tags and jump to the untagged $\LI$ phase. If we do not (or cannot) apply $\pass$, we move to the tagged $\F$ phase via $\F2\Pass$.
\item[($\vdash^\bullet_\F$)] The possible rules to apply (depending on the stoup and succedent formula) are $\ax$, $\unitr$, $\tr$ or $\lleft$. If we apply $\tr$, we remove all tags from the context $\Gamma, \Delta$ and move the first premise to the tagged $\RI$ phase again. The tags are removed from the context in order to reset tracking of new formulae. The most interesting case is $\lleft$, which can only be applied if the $\Gamma$ part of the context $\Gamma, \Delta$ contains at least one tagged formula. This side condition implements the restriction allowing the elimination of undesired non-determinism of type (ii). All tags are removed from $\Gamma, \Delta$ and proof search continues in the appropriate untagged phases.
\end{itemize}

The employment of tag annotations eliminates the two types of undesired non-determinism. For example, only one of the two derivations in (\ref{eq:passtr}) is valid using the rules in (\ref{eq:focus}), and similarly for (\ref{eq:llefttr}).
\begin{displaymath}
\small
  \begin{array}{cc}
      \infer[\pass]{{-} \mid X , \Gamma , \Delta \vdash_{\Pass} P \ot C}{
        \infer[\mathsf{sw}]{X \mid \Gamma , \Delta \vdash_{\LI} P \ot C}{
          \infer[\tr]{X \mid \Gamma , \Delta \vdash_{\F} P \ot C}{
            \infer{X \mid \Gamma \vdash^{\bullet}_{\RI} P}{
              \deduce{X \mid \Gamma \vdash^{\bullet}_{\LI} P}{f}
              }
            &
            \deduce{{-} \mid \Delta \vdash_{\RI} C}{g}
          }
        }
      }
    &
    \infer[\mathsf{sw}]{{-} \mid X , \Gamma , \Delta \vdash_{\Pass} P \ot C}{
      \infer[\tr]{{-} \mid X , \Gamma , \Delta \vdash_{\F} P \ot C}{
        \infer[\mathsf{sw}]{{-} \mid X , \Gamma \vdash^{\bullet}_{\RI} P}{
          \deduce{{-} \mid X , \Gamma \vdash^{\bullet}_{\Pass} P}{??}
        }
        &
        \deduce{{-} \mid \Delta \vdash_{\RI} C}{g}
      }
    }
    \\
    (\text{same derivation as in (\ref{eq:passtr})})
    &
    (\text{$\pass$ not applicable since $X$ is not tagged})
    \\[10pt]
      \infer[\lleft]{A \lolli X \mid \Gamma , \Delta , \Lambda \vdash_{\F} P \ot D}{
        \deduce{{-} \mid \Gamma \vdash_{\RI} A}{f}
        &
        \infer[\mathsf{sw}]{X \mid \Delta , \Lambda \vdash_{\LI} P \ot D}{
          \infer[\tr]{X \mid \Delta , \Lambda \vdash_{\F} P \ot D}{
            \infer[\mathsf{sw}]{X \mid \Delta \vdash^{\bullet}_{\RI} P}{
              \deduce{X \mid \Delta \vdash^{\bullet}_{\LI} P}{g}
              }
            &
            \deduce{{-} \mid \Lambda \vdash_{\RI} D}{h}
          }
        }
      }
    &
      \infer[\tr]{A \lolli X \mid \Gamma , \Delta , \Lambda \vdash_{\F} P \ot D}{
        \infer[\mathsf{sw}]{A \lolli X \mid \Gamma , \Delta \vdash^{\bullet}_{\RI} P}{
          \deduce{A \lolli X \mid \Gamma , \Delta \vdash^{\bullet}_{\F} P}{??}
        }
        &
        \deduce{{-} \mid \Lambda \vdash_{\RI} D}{h}
      }
    \\
    (\text{same derivation as in (\ref{eq:llefttr})})
    &
    (\text{$\lleft$ not applicable since $\Gamma$ is tag-free})
   \end{array}
  \end{displaymath}

 The extra restrictions on sequents and formulae that the rules in (\ref{eq:focus}) impose in comparison to those in (\ref{eq:naive}) do not reduce derivability, e.g., the sequents ${-} \mid Y \vdash_{\F} (X \lolli X) \ot Y$, whose proof requires passivation of a new formula in the derivation of the first premise of a $\tr$ application, $X \lolli Y \mid Z \vdash_{\RI} (X \lolli Y) \ot Z$, whose proof requires application of $\tr$ before $\lleft$,
 and $X \lolli (Y \ot Z) \mid X \vdash_\F Y \ot Z$, which needs $\lleft$ invoked before $\tr$, are all derivable.
 \begin{displaymath}
 	\footnotesize
   \begin{array}{c}
    \proofbox{
     \infer[\tr]{{-} \mid Y \vdash_{\F} (X \lolli X) \ot Y}{
      \infer[\lright]{{-} \vdash \quad \vdash^{\bullet}_{\RI} X \lolli X}{
        \infer[\mathsf{sw}]{{-} \mid X^{\bullet} \vdash^{\bullet}_{\RI} X}{
          \infer[\pass]{{-} \mid X^{\bullet} \vdash^{\bullet}_{\Pass} X}{
            \infer[\mathsf{sw}]{X \mid \quad \vdash_{\LI} X}{
              \infer[\ax]{X \mid \quad \vdash_{\F} X}{}
            }
          }
        }
      }
      &
      \infer[\mathsf{sw}]{{-} \mid Y \vdash_{\RI} Y}{
        \infer[\pass]{{-} \mid Y \vdash_{\Pass} Y}{
          \infer[\mathsf{sw}]{Y \mid \quad \vdash_{\LI} Y}{
            \infer[\ax]{Y \mid \quad \vdash_{\F} Y}{}
          }
        }
      }
     }
     }
 \\
   \proofbox{
     \infer[\tr]{X \lolli Y \mid Z \vdash_{\F} (X \lolli Y) \ot Z}{
      \infer[\lright]{X \lolli Y \mid \quad \vdash^{\bullet}_{\RI} X \lolli Y}{
        \infer[\mathsf{sw}]{X \lolli Y \mid X^{\bullet} \vdash^{\bullet}_{\RI} Y}{
          \infer[\lleft]{X \lolli Y \mid X^{\bullet} \vdash^{\bullet}_{\F} Y}{
            \infer[\mathsf{sw}]{{-} \mid X \vdash_{\RI} X}{
              \infer[\pass]{{-} \mid X \vdash_{\Pass} X}{
                \infer[\mathsf{sw}]{X \mid \quad \vdash_{\LI} X}{
                  \infer[\ax]{X \mid \quad \vdash_{\F} X}{}
                }
              }
            }
            &
            \infer[\mathsf{sw}]{Y \mid \quad \vdash_{\LI} Y}{
              \infer[\ax]{Y \mid \quad \vdash_{\F} Y}{}
            }
          }
        }
      }
      &
      \infer[\mathsf{sw}]{{-} \mid Z \vdash_{\RI} Z}{
        \infer[\pass]{{-} \mid Z \vdash_{\Pass} Z}{
          \infer[\mathsf{sw}]{Z \mid \quad \vdash_{\LI} Z}{
            \infer[\ax]{Z \mid \quad \vdash_{\F} Z}{}
          }
        }
      }
     }
     \quad
     \infer[\lleft]{X \lolli (Y \ot Z) \mid X \vdash_{\F} Y \ot Z}{
      \infer[\mathsf{sw}]{{-} \mid X \vdash_{\RI} X}{
        \infer[\pass]{{-} \mid X \vdash_{\Pass} X}{
          \infer[\mathsf{sw}]{X \mid \quad \vdash_{\LI} X}{
            \infer[\ax]{X \mid \quad \vdash_{\F} X}{}
          }
        }
      }
      &
      \infer[\tl]{Y \ot Z \mid \quad \vdash_{\LI} Y \ot Z}{
        \infer[\mathsf{sw}]{Y \mid Z \vdash_{\LI} Y \ot Z}{
          \infer[\tr]{Y \mid Z \vdash_{\F} Y \ot Z}{
            \infer[\mathsf{sw}]{Y \mid \quad \vdash_{\RI}^{\bullet} Y}{
              \infer[\ax]{Y \mid \quad \vdash_{\F}^\bullet Y}{}
            }
            &
            \infer[\mathsf{sw}]{{-} \mid Z \vdash_{\RI} Z}{
              \infer[\pass]{{-} \mid Z \vdash_{\Pass} Z}{
                \infer[\mathsf{sw}]{Z \mid \quad \vdash_{\LI} Z}{
                  \infer[\ax]{Z \mid \quad \vdash_{\F} Z}{}
                }
              }
            }
          }
        }
      }
     }
     }
    \end{array}
   \normalsize
 \end{displaymath}

We should point out that although the focused calculus is free of the undesired non-determinism that the na\"ive attempt (\ref{eq:naive}) suffered from, it is still non-deterministic and this has to be so. In particular, the focused calculus (\ref{eq:focus}) keeps the following two types of non-determinism of the focused calculus of \cite{uustalu:sequent:2021} (see the analysis in \cite{uustalu:proof:nodate}).
We call these types of non-determinism \emph{essential} because they reflect the fact that there are sequents with multiple derivations in \SkNMILL\ that are not $\circeq$-equivalent.

\begin{enumerate}
  \item[1.] In phase $\Pass$, when the stoup is empty, there is a choice of whether to apply $\pass$ or $\F2\Pass$ and sometimes both options lead to a derivation. For example, the sequent $X \mid \I \ot Y \vdash_{\F} X \ot (\I \ot Y)$ has two distinct derivations in the focused system and the corresponding derivations in \SkNMILL\ are not $\circeq$-equivalent.
  \begin{equation*}
  \footnotesize
    \begin{array}{cc}
    \infer[\tr]{X \mid \I \ot Y \vdash_{\F} X \ot (\I \ot Y)}{
    \infer[\mathsf{sw}]{X \mid \quad \vdash^{\bullet}_{\RI} X}{
      \infer[\ax]{X \mid \quad \vdash_{\F}^\bullet X}{}
    }
      &
      \infer[\mathsf{sw}]{{-} \mid \I \ot Y \vdash_{\RI} \I \ot Y}{
        \infer[\pass]{{-} \mid \I \ot Y \vdash_{\Pass} \I \ot Y}{
          \infer[\tl]{\I \ot Y \mid \quad \vdash_{\LI} \I \ot Y}{
            \infer[\unitl]{\I \mid Y \vdash_{\LI} \I \ot Y}{
              \infer[\mathsf{sw}]{{-} \mid Y \vdash_{\LI} \I \ot Y}{
                \infer[\tr]{{-} \mid Y \vdash_{\F} \I \ot Y}{
                  \infer[\mathsf{sw}]{{-} \mid \quad \vdash^{\bullet}_{\RI} \I}{
                    \infer[\unitr]{{-} \mid \quad \vdash_{\F}^\bullet \I}{}
                  }
                  &
                  \infer[\mathsf{sw}]{{-} \mid Y \vdash_{\RI} Y}{
                    \infer[\pass]{{-} \mid Y \vdash_{\Pass} Y}{
                      \infer[\mathsf{sw}]{Y \mid \quad \vdash_{\LI} Y}{
                        \infer[\ax]{Y \mid \quad \vdash_{F} Y}{}
                      }
                    }
                  }
                }
              }
            }
          }
        }
      }
    }
    &
    \infer[\tr]{X \mid \I \ot Y \vdash_{\F} X \ot (\I \ot Y)}{
      \infer[\mathsf{sw}]{X \mid \quad \vdash^{\bullet}_{\RI} X}{
        \infer[\ax]{X \mid \quad \vdash_{\F}^\bullet X}{}
      }
      &
      \infer[\mathsf{sw}]{{-} \mid \I \ot Y \vdash_{\RI} \I \ot Y}{
        \infer[\tr]{{-} \mid \I \ot Y \vdash_{\F} \I \ot Y}{
          \infer[\mathsf{sw}]{{-} \mid \quad \vdash^{\bullet}_{\RI} \I}{
            \infer[\unitr]{{-} \mid \quad \vdash_{\F}^\bullet \I}{}
          }
          &
          \infer[\mathsf{sw}]{{-} \mid \I \ot Y \vdash_{\RI} Y}{
            \infer[\pass]{{-} \mid \I \ot Y \vdash_{\Pass} Y}{
              \infer[\mathsf{sw}]{\I \ot Y \mid \quad \vdash_{\LI} Y}{
                \infer[\tl]{\I \ot \mid \quad \vdash_{\LI} Y}{
                  \infer[\unitl]{\I \mid Y \vdash_{\LI}}{
                    \infer[\mathsf{sw}]{{-} \mid Y \vdash_{\LI} Y}{
                      \infer[\pass]{{-} \mid Y \vdash_{\Pass} Y}{
                        \infer[\mathsf{sw}]{Y \mid \quad \vdash_{\LI} Y}{
                          \infer[\ax]{Y \mid \quad \vdash_{\F} Y}{}
                        }
                      }
                    }
                  }
                }
              }
            }
          }
        }
      }
    }
    \end{array}
    \normalsize
  \end{equation*}
  \item[2.] In phase $\F$, if the succedent formula is $A \ot B$, and the rule $\tr$ is to be applied, the context can be split anywhere. Sometimes
several of these splits can lead to a derivation.
  For example, the sequent $X \mid \I , Y \vdash_{\F} (X \ot \I) \ot Y$ has two distinct derivations.
  \begin{equation*}
  \footnotesize
    \begin{array}{cc}
      \infer[\tr]{X \mid \I , Y \vdash_{\F} (X \ot \I) \ot Y}{
       \infer[\mathsf{sw}]{X \mid \I \vdash^{\bullet}_{\RI} X \ot \I}{
         \infer[\tr]{X \mid \I \vdash_{\F}^\bullet X \ot \I}{
           \infer[\mathsf{sw}]{X \mid \quad \vdash^{\bullet}_{\RI} X}{
             \infer[\ax]{X \mid \quad \vdash_{\F}^\bullet X}{}
           }
           &
           \infer[\mathsf{sw}]{{-} \mid \I \vdash_{\RI} \I}{
             \infer[\pass]{{-} \mid \I \vdash_{\Pass} \I}{
               \infer[\unitl]{\I \mid \quad \vdash_{\LI} \I}{
                 \infer[\mathsf{sw}]{{-} \mid \quad \vdash_{\LI} \I}{
                   \infer[\unitr]{{-} \mid \quad \vdash_{\F} \I}{}
                 }
               }
             }
           }
         }
       }
       &
       \infer[\mathsf{sw}]{{-} \mid Y \vdash_{\RI} Y}{
         \infer[\pass]{{-} \mid Y \vdash_{\Pass} Y}{
           \infer[\mathsf{sw}]{Y \mid \quad \vdash_{\LI} Y}{
             \infer[\ax]{Y \mid \quad \vdash_{\F} Y}{}
           }
         }
       }
      }
      &
      \infer[\tr]{X \mid \I , Y \vdash_{\F} (X \ot \I) \ot Y}{
       \infer[\mathsf{sw}]{X \mid \quad \vdash^{\bullet}_{\RI} X \ot \I}{
         \infer[\tr]{X \mid \quad \vdash_{\F}^\bullet X \ot \I}{
           \infer[\mathsf{sw}]{X \mid \quad \vdash^{\bullet}_{\RI} X}{
             \infer[\ax]{X \mid \quad \vdash_{\F}^\bullet X}{}
           }
           &
           \infer[\mathsf{sw}]{{-} \mid \quad \vdash_{\RI} \I}{
             \infer[\ax]{{-} \mid \quad \vdash_{\F} \I}{}
           }
         }
       }
       &
       \infer[\mathsf{sw}]{{-} \mid \I , Y \vdash_{\RI} Y}{
         \infer[\pass]{{-} \mid \I , Y \vdash_{\Pass} Y}{
           \infer[\unitl]{\I \mid Y \vdash_{\LI} Y}{
             \infer[\mathsf{sw}]{{-} \mid Y \vdash_{\LI} Y}{
               \infer[\pass]{Y \mid \quad \vdash_{\Pass} Y}{
                 \infer[\mathsf{sw}]{Y \mid \quad \vdash_{\LI} Y}{
                   \infer[\ax]{Y \mid \quad \vdash_{\F} Y}{}
                 }
               }
             }
           }
         }
       }
      }
    \end{array}
    \normalsize
  \end{equation*}
\end{enumerate}

The presence of $\lolli$ adds two further types of essential
nondeterminism.

\begin{enumerate}
\item[3.] This type is similar to type 2. In phase $\F$, if the stoup formula is $A \lolli B$, and the rule $\lleft$ is to be applied, the context can be split anywhere. Again, sometimes several of these splits lead to a derivation.
 For example, the sequent $\I \lolli (X \lolli Y) \mid \I , X \vdash_{\F} Y$ has two derivations.
\begin{equation*}
\footnotesize
  \begin{array}{cc}
    \infer[\lleft]{\I \lolli (X \lolli Y) \mid \I , X \vdash_{\F} Y}{
      \infer[\mathsf{sw}]{{-} \mid \I \vdash_{\RI} \I}{
        \infer[\pass]{{-} \mid \I \vdash_{\Pass} \I}{
          \infer[\unitl]{\I \mid \quad \vdash_{\LI} \I}{
            \infer[\mathsf{sw}]{{-} \mid \quad \vdash_{\LI} \I}{
              \infer[\unitr]{{-} \mid \quad \vdash_{\F} \I}{}
            }
          }
        }
      }
      &
      \infer[\mathsf{sw}]{X \lolli Y \mid X \vdash_{\LI} Y}{
        \infer[\lleft]{X \lolli Y \mid X \vdash_{\F} Y}{
          \infer[\mathsf{sw}]{{-} \mid X \vdash_{\RI} X}{
            \infer[\pass]{{-} \mid X \vdash_{\Pass} X}{
              \infer[\mathsf{sw}]{X \mid \quad \vdash_{\LI} X}{
                \infer[\ax]{X \mid \quad \vdash_{\F} X}{}
              }
            }
          }
          &
          \infer[\mathsf{sw}]{Y \mid \quad \vdash_{\LI} Y}{
            \infer[\ax]{Y \mid \quad \vdash_{\F} Y}{}
          }
        }
      }
    }
    &
    \infer[\lleft]{\I \lolli (X \lolli Y) \mid \I , X \vdash_{\F} Y}{
      \infer[\mathsf{sw}]{{-} \mid \quad \vdash_{\RI} \I}{
        \infer[\unitr]{{-} \mid \quad \vdash_{\F} \I}{}
      }
      &
      \infer[\mathsf{sw}]{X \lolli Y \mid \I , X \vdash_{\LI} Y}{
        \infer[\lleft]{X \lolli Y \mid \I , X \vdash_{\F} Y}{
          \infer[\mathsf{sw}]{{-} \mid \I , X \vdash_{\RI} X}{
            \infer[\pass]{{-} \mid \I , X \vdash_{\Pass} X}{
              \infer[\unitl]{\I \mid X \vdash_{\LI} X}{
                \infer[\mathsf{sw}]{{-} \mid X \vdash_{\LI} X}{
                  \infer[\pass]{{-} \mid X \vdash_{\Pass} X}{
                    \infer[\mathsf{sw}]{X \mid \quad \vdash_{\LI} X}{
                      \infer[\ax]{X \mid \quad \vdash_{\F} X}{}
                    }
                  }
                }
              }
            }
          }
          &
          \infer[\mathsf{sw}]{Y \mid \quad \vdash_{\LI} Y}{
            \infer[\ax]{Y \mid \quad \vdash_{\F} Y}{}
          }
            }
          }
        }
  \end{array}
  \normalsize
\end{equation*}
\item[4.] Finally, in phase $\F$, if the succedent formula is $A \ot B$ and the stoup formula is $A' \lolli B'$, then both $\tr$ and $\lleft$ can be applied first and sometimes both options lead to a derivation. For example, the sequent 
$\I \lolli \I \mid Z \vdash_{\F} (\I \lolli \I) \ot Z$ has two derivations. 
\begin{equation*}
\footnotesize
  \begin{array}{cc}
    \infer[\tr]{\I \lolli \I \mid Z \vdash_{\F} (\I \lolli \I) \ot Z}{
      \infer[\lright]{\I \lolli \I \mid \quad \vdash^{\bullet}_{\RI} \I \lolli \I}{
        \infer[\mathsf{sw}]{\I \lolli \I \mid \I^{\bullet} \vdash^{\bullet}_{\RI} \I}{
          \infer[\lleft]{\I \lolli \I \mid \I^{\bullet} \vdash^{\bullet}_{\F} \I}{
            \infer[\mathsf{sw}]{{-} \mid \I \vdash_{\RI} \I}{
              \infer[\pass]{{-} \mid \I \vdash_{\Pass} \I}{
                \infer[\unitl]{\I \mid \quad \vdash_{\LI} \I}{
                  \infer[\mathsf{sw}]{{-} \mid \quad \vdash_{\LI} \I}{
                    \infer[\unitr]{{-} \mid \quad \vdash_{\F} \I}{}
                  }
                }
              }
            }
            &
            \infer[\unitl]{\I \mid \quad \vdash_{\LI} \I}{
              \infer[\mathsf{sw}]{{-} \mid \quad \vdash_{\LI} \I}{
                \infer[\unitr]{{-} \mid \quad \vdash_{\F} \I}{}
              }
            }
          }
        }
      }
      &
      \infer[\mathsf{sw}]{{-} \mid Z \vdash_{\LI} Z}{
        \infer[\pass]{{-} \mid Z \vdash_{\Pass} Z}{
          \infer[\mathsf{sw}]{Z \mid \quad \vdash_{\LI} Z}{
            \infer[\ax]{Z \mid \quad \vdash_{\F} Z}{}
          }
        }
      }
    }
    &
    \infer[\lleft]{\I \lolli \I \mid Z \vdash_{\F} (\I \lolli \I) \ot Z}{
      \infer[\mathsf{sw}]{{-} \mid \quad \vdash_{\RI} \I}{
        \infer[\unitr]{{-} \mid \quad \vdash_{\F} \I}{}
      }
      &
      \infer[\unitl]{\I \mid Z \vdash_{\LI} (\I \lolli \I) \ot Z}{
        \infer[\mathsf{sw}]{{-} \mid Z \vdash_{\LI} (\I \lolli \I) \ot Z}{
          \infer[\tr]{{-} \mid Z \vdash_{\F} (\I \lolli \I) \ot Z}{
            \infer[\lright]{{-} \mid \quad \vdash^{\bullet}_{\RI} \I \lolli \I}{
              \infer[\mathsf{sw}]{{-} \mid \I^{\bullet} \vdash^{\bullet}_{\RI} \I}{
                \infer[\pass]{{-} \mid \I^{\bullet} \vdash^{\bullet}_{\Pass} \I}{
                  \infer[\mathsf{sw}]{\I \mid \quad \vdash_{\LI} \I}{
                    \infer[\unitl]{\I \mid \quad \vdash_{\LI} \I}{
                      \infer[\mathsf{sw}]{{-} \mid \quad \vdash_{\LI}}{
                        \infer[\unitr]{{-} \mid \quad \vdash_{\F} \I}{}
                      }
                    }
                  }
                }
              }
            }
            &
            \infer[\mathsf{sw}]{{-} \mid Z \vdash_{\LI} Z}{
              \infer[\pass]{{-} \mid Z \vdash_{\Pass} Z}{
                \infer[\mathsf{sw}]{Z \mid \quad \vdash_{\LI} Z}{
                  \infer[\ax]{Z \mid \quad \vdash_{\F} Z}{}
                }
              }
            }
          }
        }
      }
    }
  \end{array}
  \normalsize
\end{equation*}
Note that, in the second derivation, the rule $\pass$ applies to the sequent ${-} \mid \I^{\bullet} \vdash^{\bullet}_{\Pass} \I$ only because the context formula $\I^\bullet$ is tagged.
\end{enumerate}


 \begin{theorem}
   The focused sequent calculus is sound and complete wrt.\ the  sequent calculus of Section \ref{sec2}: there is a bijective correspondence between the set of derivations of $S \mid \Gamma \vdash A$ quotiented by $\circeq$ and the set of derivations of $S \mid \Gamma \vdash_\RI A$.
 \end{theorem}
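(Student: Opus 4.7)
My plan is to establish the bijection by constructing maps in both directions and then showing they are mutually inverse (with equality up to $\circeq$ on the sequent calculus side and syntactic equality on the focused side).

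First, I would define an embedding $\mathsf{emb}$ from focused derivations to sequent calculus derivations. This is essentially a forgetful map: each focused inference rule in (\ref{eq:focus}) has an obvious unfocused counterpart in (\ref{eq:seqcalc}), and the phase-switching rules $\LI2\RI$, $\Pass2\LI$, $\F2\Pass$ are simply dropped. The tags $\bullet$ are discarded, so a focused derivation of $S \mid \Gamma \xvdash_{\RI} A$ is sent to a sequent calculus derivation of $S \mid \Gamma^\circ \vdash A$. This map requires no quotienting and is defined by straightforward induction on focused derivations.

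Second, I would define a focusing map $\mathsf{focus}$ going in the other direction. The natural approach is to prove, by induction on sequent calculus derivations, that each of the eight inference rules in (\ref{eq:seqcalc}) is \emph{admissible} in the focused calculus at phase $\RI$ (or at whatever phase is appropriate given the conclusion). The easy cases are $\lright$, $\unitl$, $\tl$ and $\unitr$, which have matching invertible focused rules. For $\pass$, $\tr$ and $\lleft$, one needs admissible versions that properly interleave the phase transitions $\LI2\RI$, $\Pass2\LI$ and $\F2\Pass$, handle the tagging discipline on the first premise of $\tr$ (resetting and inserting tags appropriately via the $\Gamma^\circ$ projection), and check the side condition on $\lleft$. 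The $\ax$ rule at a non-atomic formula is handled by an $\eta$-expansion lemma, proved by induction on the formula, mirroring the first three equations of (\ref{fig:circeq}).

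Third, I need to verify that $\mathsf{focus}$ respects $\circeq$: every generating equation in (\ref{fig:circeq}) must be sent to a syntactic equality of focused derivations. This is where the tag-annotated calculus pays off: the permutative conversions for $\tr/\pass$ and $\tr/\lleft$ that caused trouble in the na\"ive calculus (\ref{eq:naive}) are now resolved because the tag side conditions force a unique choice. The remaining conversions and the $\eta$-rules are more routine but tedious: each requires computing both sides of the admissibility definition of the rules involved and checking they agree syntactically. I would expect this to require a number of auxiliary lemmas characterizing how the admissible $\tr$, $\pass$, $\lleft$, etc.\ interact with each other and with the phase switches. The main obstacle of the whole argument lives here --- there are many cases and the interaction with the tagging discipline must be tracked carefully.

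Finally, I would close the bijection in two steps. For $\mathsf{focus} \circ \mathsf{emb} = \mathrm{id}$, induction on focused derivations suffices, relying on the fact that the admissible focused rules produced by $\mathsf{focus}$ reduce to the corresponding focused inference rule when applied to a derivation that is already focused in the expected phase; here the tags carried by $\mathsf{emb}$'s input disappear and are restored identically by $\mathsf{focus}$. For $\mathsf{emb} \circ \mathsf{focus} \circeq \mathrm{id}$, induction on sequent calculus derivations reduces the claim to showing that each admissible focused rule, after embedding, is $\circeq$-equal to the corresponding unfocused rule applied to the embeddings of the premises --- a congruence-closure argument using the defining equations of $\circeq$. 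Combining these with the previous step gives the bijection between $\circeq$-classes of sequent calculus derivations of $S \mid \Gamma \vdash A$ and focused derivations of $S \mid \Gamma \vdash_{\RI} A$, as claimed. The Agda formalization mentioned in the introduction provides a mechanical check of the bookkeeping.
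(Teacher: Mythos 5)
Your overall architecture matches the paper's: an erasing map $\mathsf{emb}$ for soundness, admissibility of the unfocused rules in the focused calculus to define $\mathsf{focus}$, preservation of $\circeq$, and the two round-trip identities $\mathsf{focus}\,(\mathsf{emb}_{\RI}\,f) = f$ and $\mathsf{emb}_{\RI}\,(\mathsf{focus}\,g) \circeq g$. There is, however, one concrete gap in your admissibility step, and it is exactly the point the paper singles out as the interesting one. You propose to prove an admissible rule $\tr^{\RI}$, with both premises and conclusion in phase $\RI$, by handling the phase transitions and the tagging discipline; but the induction on the derivation of the first premise fails before tags even enter the picture. When the first premise $S \mid \Gamma \vdash_{\RI} A' \lolli B'$ ends with an application of $\lright$, its subderivation proves $S \mid \Gamma, A' \vdash_{\RI} B'$, and the inductive hypothesis applied to it and to $g : {-} \mid \Delta \vdash_{\RI} B$ yields a derivation of $S \mid \Gamma, A', \Delta \vdash_{\RI} B' \ot B$ --- the wrong sequent, with $A'$ stranded inside the context and the wrong succedent. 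No amount of care about phase switches or the $\Gamma^{\circ}$ projection repairs this; the \emph{statement} of the admissible rule must be strengthened. The paper's fix is to prove admissibility of the generalized rule
\begin{displaymath}
\infer[\tr_{\Gamma'}^{\RI}]{S \mid \Gamma , \Delta \vdash_{\RI} \ldbc \Gamma' \mid A \rdbc_{\lolli} \ot B}{
      S \mid \Gamma , \Gamma' \vdash_{\RI} A
      &
      {-} \mid \Delta \vdash_{\RI} B
    }
\end{displaymath}
parametrized by an extra context $\Gamma'$ that accumulates the formulae moved into the antecedent by $\lright$ and reappears in the succedent as an iterated implication; the rule you actually need is the instance with $\Gamma'$ empty. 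Once this strengthening of the induction hypothesis is in place, the rest of your plan (checking that the generators of $\circeq$ are sent to syntactic equalities, exploiting the tag side conditions for the $\tr/\pass$ and $\tr/\lleft$ permutations, and establishing the two inverse laws) proceeds as you describe and as in the paper.
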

Soundness is immediate: there exist functions $\mathsf{emb}_{ph} : S \mid \Gamma \vdash^x_{ph} A \to S \mid \Gamma \vdash A$, for all $ph \in \{\RI,\LI,\Pass, \F \}$, which simply erase all phase annotations and tags. Completeness follows from the fact that the following rules are all admissible:
\begin{equation}\label{eq:admis}
\small
  \begin{array}{c}
    \infer[\unitl^{\RI}]{\I \mid \Gamma \vdash_{\RI} C}{{-} \mid \Gamma \vdash_{\RI} C}
    \quad
    \infer[\tl^{\RI}]{A \ot B \mid \Gamma \vdash_{\RI} C}{A \mid B, \Gamma \vdash_{\RI} C}
    \quad
    \infer[\pass^{\RI}]{{-} \mid \Gamma \vdash_{\RI} C}{A \mid \Gamma \vdash_{\RI} C}
    \quad
    \infer[\ax^{\RI}]{A \mid \quad \vdash_{\RI} A}{}
    \quad
    \infer[\unitr^{\RI}]{{-} \mid \quad \vdash_{\RI} \I}{}
\\[6pt]
    \infer[\lleft^{\RI}]{A \lolli B \mid \Gamma , \Delta \vdash_{\RI} C}{
    {-} \mid \Gamma \vdash_{\RI} A
    &
    B \mid \Delta \vdash_{\RI} C
    }
    \qquad
    \infer[\tr_{\Gamma'}^{\RI}]{S \mid \Gamma , \Delta \vdash_{\RI} \ldbc \Gamma' \mid A \rdbc_{\lolli} \ot B}{
      S \mid \Gamma , \Gamma' \vdash_{\RI} A
      &
      {-} \mid \Delta \vdash_{\RI} B
    }
  \end{array}
\end{equation}
The interesting one is $\tr_{\Gamma'}^{\RI}$. The tensor right rule $\tr^\RI$, with premises and conclusion in phase $\RI$, is an instance of the latter with empty $\Gamma'$.
Without this generalization including the extra context $\Gamma'$, one quickly discovers that finding a proof of $\tr^\RI$, proceeding by induction on the structure of the derivation of the first premise, is not possible when this derivation ends with an application of $\lright$:
\begin{displaymath}
\small
  \proofbox{
    \infer[\tr^{\RI}]{S \mid \Gamma , \Delta \vdash_{\RI} (A' \lolli B') \ot B}{
    \infer[\lright]{S \mid \Gamma \vdash_{\RI} A' \lolli B'}{
      \deduce{S \mid \Gamma , A' \vdash_{\RI} B'}{f}
    }
    &
    \deduce{{-} \mid \Delta \vdash_{\RI} B}{g}
    }
    } = \quad ??
\end{displaymath}
The inductive hypothesis applied to $f$ and $g$ would produce a derivation of the wrong sequent. The use of $\Gamma'$ in the generalized rule $\tr_{\Gamma'}^{\RI}$ is there to fix precisely this issue.

The admissibility of the rules in (\ref{eq:admis}) allows the construction of a function $\mathsf{focus} : S \mid \Gamma \vdash A \to \linebreak S \mid \Gamma \vdash_\RI A$, replacing applications of each rule in (\ref{eq:seqcalc}) with inferences by the corresponding admissible focused rule in phase $\RI$.
It is possible to prove that the function $\mathsf{focus}$ maps  $\circeq$-equivalent derivations in the sequent calculus for \SkNMILL\ to syntactically identical derivations in focused sequent calculus. One can also show that $\mathsf{focus}$ is the inverse of $\mathsf{emb}_\RI$, i.e. $\mathsf{focus}\;(\mathsf{emb}_\RI \;f) = f$ for all $f : S \mid \Gamma \vdash_\RI A$ and $\mathsf{emb}_\RI\;(\mathsf{focus}\;g) \circeq g$ for all $g : S \mid \Gamma \vdash A$. In other words, each $\circeq$-equivalence class in the sequent calculus corresponds uniquely to a single derivation in the focused sequent calculus.

The focused sequent calculus solves the \emph{coherence problem} for skew monoidal closed categories.
As proved in Theorems \ref{thm:models}, \ref{thm:unique}, the sequent calculus for \SkNMILL\ is a presentation of the free skew monoidal closed category $\FSkMCC(\mathsf{At})$ on the set $\mathsf{At}$. The coherence problem is the problem of deciding whether two parallel maps in $\FSkMCC(\mathsf{At})$ are equal. 
This is equivalent to deciding whether two sequent calculus derivations $f,g : A \mid ~ \vdash B$ are in the same $\circeq$-equivalence class. But that in turn is the same as deciding whether $\mathsf{focus}\;f = \mathsf{focus}\;g$ in the focused sequent calculus, and deciding syntactic equality of focused derivations is straightforward. The Hilbert-style calculus is a direct presentation of $\FSkMCC(\mathsf{At})$, but thanks to the bijection (up to $\doteq$ resp.\ $\circeq$) between Hilbert-style and sequent calculus derivations, we can also decide if two Hilbert-style derivations $f, g : A \Rightarrow B$ are in the same $\doteq$-equivalence class.

\section{Conclusion}

The paper describes a sequent calculus for \SkNMILL, a skew variant of non-commutative multiplicative intuitionistic linear logic. The introduction of the logic \SkNMILL\ via this sequent calculus is motivated by the categorical notion of skew monoidal closed category, which yields the intended categorical semantics for the logic. Sequent calculus derivations admit unique normal forms wrt.\ a congruence relation $\circeq$ capturing the equational theory of skew monoidal closed categories at the level of derivations. Normal forms can be organized in a focused sequent calculus, where each focused derivation uniquely corresponds to (and so represents) a $\circeq$-equivalence class in the unfocused sequent calculus. In order to deal with all the permutative conversions of $\circeq$ and to consequently eliminate the related sources of non-determinism in root-first proof search, the focused sequent calculus employs a system of tags for keeping track of the new formulae occurring in the context while building a derivation. The focused sequent calculus solves the coherence problem for skew monoidal closed categories: deciding if a canonical diagram commutes in every skew monoidal closed category reduces to deciding equality of focused derivations.

We plan to investigate alternative presentations of \SkNMILL, such as natural deduction. Analogously to the case of the sequent calculus studied in this paper, we expect natural deduction derivation to be strongly normalizing wrt.\ an appropriately defined $\beta\!\eta$-conversion. We are interested in directly comparing the resulting $\beta\!\eta$-long normal forms with the focused derivations of Section \ref{sec:focus}.

The system of tags in focused proofs, used for taming the non-deterministic choices arising in proof search to be able to canonically represent each equivalence class of $\circeq$, appears to be a new idea. 
There exist other techniques for drastically reducing non-determinism in proof search, such as multi-focusing \cite{chaudhuri:canonical:2008} and saturated focusing \cite{scherer:simple:2015}, and we wonder if our system of tags is in any way related to these. As a general disclaimer, we should state that we have interpreted focusing broadly as the idea of root-first proof search defining a normal form, based on a careful discipline of application of invertible and non-invertible rules to reduce non-determinism, but not necessarily driven by a polarity-centric analysis. In this respect, our approach is similar in spirit to, e.g., \cite{dyckhoff:ljq}.

This paper represents the latest installment of a large project aiming at the development of the proof theory of categories with skew structure. So far the project, promoted and advanced by Uustalu, Veltri and Zeilberger, has investigated proof systems for skew semigroup categories \cite{zeilberger:semiassociative:19}, (non-symmetric and symmetric) skew monoidal categories \cite{uustalu:sequent:2021,uustalu:proof:nodate,veltri:coherence:2021} and skew prounital closed categories \cite{uustalu:deductive:nodate}. From a purely proof-theoretic perspective, the main lesson gained from the study of these skew systems consists in first insights into ways of \emph{modular} construction of focusing calculi. This is particularly highlighted in Uustalu et al.'s study of \emph{partially normal} skew monoidal categories \cite{uustalu:proof:nodate}, where one or more structural laws among $\lambda$, $\rho$ and $\alpha$ can be required to be invertible. The focused sequent calculus of partially normal skew monoidal categories is obtained from the focused sequent calculus of skew monoidal categories by modularly adding new rules for each enforced normality condition.
We expect similar modularity to show up in the case of partially normal skew monoidal closed categories.

We are also planning to develop an extension of \SkNMILL\ with an exponential modality $!$. This in turns requires the study of linear exponential comonads on skew monoidal closed categories, extending the work of Hasegawa \cite{hasegawa:linear:2017}. We would also like to consider modalities for exchange \cite{jiang:lambek:2019} and associativity/unitality.

\paragraph{Acknowledgements} N.V.\ and T.U.\ were supported by the
Estonian Research Council grants no. \linebreak PSG659, PSG749 and PRG1210, N.V.\ and
C.-S.W.\ by the ESF funded Estonian IT Academy research measure
(project 2014-2020.4.05.19-0001). T.U.\ was supported by the Icelandic
Research Fund grant no.~196323-053. Participation of C.-S.W.\ in the conference was supported by the EU
COST action CA19135 (CERCIRAS).

  \bibliographystyle{eptcs}
  \bibliography{biblio}
\end{document}